\theoremstyle{thmstyleone}%
\newtheorem{theorem}{Theorem}
\newtheorem{lemma}[theorem]{Lemma}
\newtheorem{corollary}[theorem]{Corollary}
\theoremstyle{thmstyletwo}%
\theoremstyle{thmstylethree}%
\begin{document}

\title[Edge-Vertex Domination in UDG]{Complexity and Approximability of Edge-Vertex Domination in UDG}

\author[1]{\fnm{Vishwanath Reddy} \sur{Singireddy}}\email{p20190420@hyderabad.bits-pilani.ac.in}

\author*[1]{\fnm{Manjanna} \sur{Basappa}\footnote{Partially supported by the Science and Engineering Research Board (SERB), Govt. of India, under Sanction Order No. TAR/2022/000397.}} 
\email{manjanna@hyderabad.bits-pilani.ac.in}


\affil[1]{\orgdiv{CS \& IS Department}, \orgname{BITS Pilani}, \orgaddress{\street{Hyderabad Campus}, \city{Hyderabad}, \postcode{500078}, \state{Telangana}, \country{India}}}





\abstract{Given an undirected graph $G=(V,E)$, a vertex $v\in V$ is edge-vertex (ev) dominated by an edge $e\in E$ if $v$ is either incident to $e$ or incident to an adjacent edge of $e$. A set $S^{ev}\subseteq E$ is an edge-vertex dominating set (referred to as \textit{ev}-dominating set and in short as \textit{EVDS}) of $G$ if every vertex of $G$ is \textit{ev}-dominated by at least one edge of $S^{ev}$. The minimum cardinality of an \textit{ev}-dominating set is the \textit{ev}-domination number. The edge-vertex dominating set problem is to find a minimum \textit{ev}-domination number. 
In this paper we prove that the \textit{ev}-dominating set problem is {\tt NP-hard} on unit disk graphs. We also prove that this problem admits a polynomial-time approximation scheme on unit disk graphs. Finally, we give a simple 5-factor linear-time approximation algorithm. 
}

\keywords{Unit disk graph, Edge-vertex dominating set, Approximation algorithm, Edge-facility center location}



\maketitle

\section{Introduction}\label{intro}
Given an undirected graph $G=(V,E)$, the {\it edge neighborhood} of an edge $e'\in E$ is the set of edges in $E$ which share a common vertex $v \in V$ with $e'$, i.e., the set of all edges which are adjacent to $e'$. The set of these neighbors of $e'$ is represented as the set $N_e(e')= \{ f \in E \mid e'$ and $f$ share a common vertex $v\in V \}$. The {\it closed edge neighborhood} of $e'$ is defined as $N_e[e']=N_e(e')\cup \{ e' \}$. The {\it edge neighborhood} of a set $S\subseteq E$ is $N_e(S)=\bigcup_{e' \in S}N_e(e')$. Similarly, the {\it closed edge neighborhood} of a set $S\subseteq E$ is $N_e[S]=\bigcup_{e' \in S}N_e[e']\cup S$. The {\it edge neighborhood of neighborhood} of $e'$ is $N_e(N_e(e'))=N_e^2(e')$. Similarly, the $r$-th edge neighborhood is $N_e^r(e')=N_e(N_e^{r-1}(e'))$ for an integer $r\geq1$.

Given an undirected graph $G=(V,E)$, a vertex $v\in V$ is \textit{ev (edge-vertex)-dominated} by an edge $e\in E$ if $v$ is incident to $e$ (i.e., an endpoint of $e$) or if $v$ is incident to an adjacent edge of $e$. A set $S^{ev}\subseteq E$ is an \textit{edge-vertex dominating set (EVDS)} (referred to as \textit{ev-dominating set}) of $G$ if every vertex of $G$ is $ev$-dominated by at least one edge of $S^{ev}$ (at least two edges for \textit{double edge-vertex dominating set}). The minimum cardinality of an \textit{ev-dominating set} is the \textit{ev-domination number}, denoted by $\gamma_{ev}(G)$. 
A \textit{paired-dominating set (PDS)} of a graph $G(V,E)$ with no isolated vertices is a dominating set $S^{pr}\subseteq V$ and a sub-graph induced by $S^{pr}$ in $G$ have a perfect matching. The minimum cardinality of a \textit{PDS} of $G$ is symbolized as $\gamma_{pr}(G)$. Note that \textit{EVDS} and \textit{PDS} may be completely different subsets of edges in the same graph, and their cardinalities may always be equal (see Fig. \ref{pds}). In Figure \ref{pds}, the blue colored edges represent the {\it EVDS}. However, it should be noted that this set does not fulfill the criteria to be classified as a {\it PDS}; instead, it could correspond to the set of green edges.
Another similar model, called {\it total domination} in a graph, is defined in terms of only vertices instead of \textit{edge-vertex}. A {\it total dominating set} ({\it TDS}) of a graph $G=(V,E)$ is a {\it dominating set} $S^d\subseteq V$ such that every vertex $v\in S^d$ is adjacent to a vertex. The {\it TDS} problem is to find such a set $S^d$ of minimum cardinality. The cardinality of minimum {\it TDS} is denoted by $\gamma_t$. We can also view a {\it TDS} in a graph as a minimum cardinality set of pairs of adjacent vertices (hence, as a set of edges induced on these vertices), where these pairs may share a common vertex. Therefore, a {\it TDS} is also an {\it EVDS} and vice versa. However, a minimum cardinality {\it EVDS} may not be a minimum cardinality {\it TDS} i.e., the set of all vertices incident to edges of a minimum cardinality {\it EVDS} may not necessarily form a minimum cardinality {\it TDS}, and vice versa (see Fig. \ref{tv}). In Figure \ref{tv}, the blue edges represent the minimum cardinality {\it EVDS}, while the red vertices represent the minimum cardinality {\it TDS}. Here we can observe that the cardinality of the vertices incident to blue edges exceeds the minimum cardinality {\it TDS} and the cardinality of the edges incident to the red vertices also exceeds the minimum cardinality {\it EVDS}. \\
\begin{figure}[!htb]
\centering
\includegraphics[scale=.45]{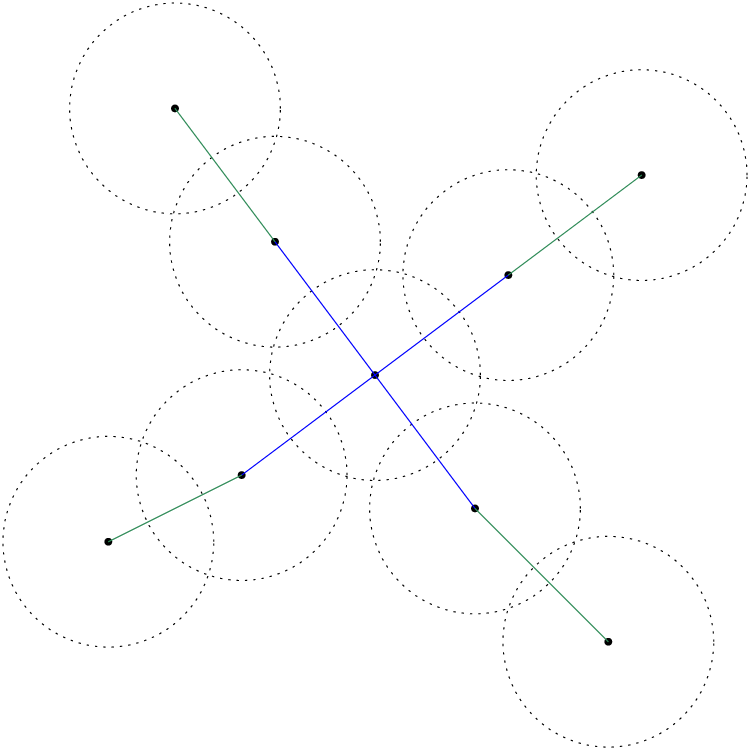}
\caption{{\it EVDS} vs {\it PDS} in {\it UDG}} \label{pds}
\centering
\end{figure}

As well known, a given set $D$ of $n$ disks of unit diameter (hence called \textit{unit disks}) induces a graph $G$, where the graph $G$ is called a unit disk graph ({\it UDG}) $G=(V,E)$ and is an undirected graph such that (i) each vertex $v$ in whose vertex set $V$ corresponds to a disk $d_v\in D$ of unit diameter in the plane, (ii) each edge $(u, v)$ in whose edge set $E$ corresponds to a pair of mutually intersecting disks $d_u$ and $d_v$ in the plane. It is important to note that in {\it UDG} also, a minimum cardinality {\it EVDS} may not be a minimum cardinality {\it TDS} (see Fig. \ref{tv}). Therefore, the study of {\it EVDS} in {\it UDG} holds significant value.

\begin{figure}[!htb]
\centering
\includegraphics[scale=.65]{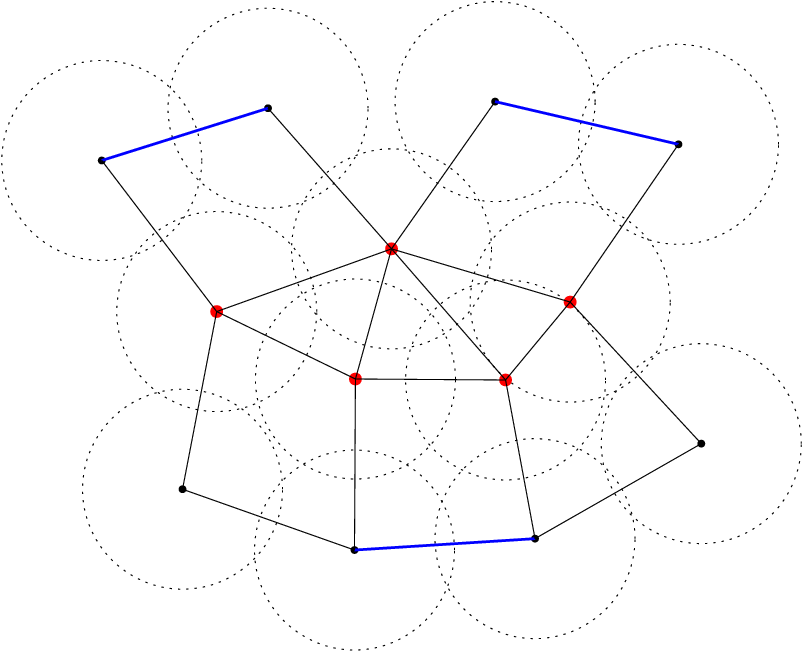}
\caption{{\it EVDS} vs {\it TDS} in {\it UDG}} \label{tv}
\centering
\end{figure}

The edge-vertex dominating set problem may have the following potential applications:
\begin{itemize}
    \item \textit{Edge facility location:} In urban areas, certain facilities, such as parks or street parking zones, that have a significant impact over a wide area. These facilities cannot be adequately represented or modeled by node centers alone because they are not solely accessible from a single entry point. Node centers typically focus on capturing the accessibility of a location from a single point, which may not accurately reflect the spatial distribution and accessibility patterns of these facilities. However, edge centers offer a more faithful representation of such facilities as they consider the multiple entry points and capture the spatial extent and accessibility dynamics more comprehensively. By utilizing edge centers, we can better understand and model the true nature of these impactful urban facilities and their influence on the surrounding area.

    \item \textit{Computer networks:} An {\it EVDS} plays an important role in identifying vulnerable edges in a network. These critical edges would give an attacker control over all connected nodes if compromised. {\it EVDS} helps prioritize security efforts by identifying the most critical edges. Security measures, like encryption or redundancy mechanisms, can then be directed toward securing these connections. Protecting these vital communication links ensures network integrity and minimizes the potential for data manipulation or interception by attackers.
\end{itemize}

\section{Related Work}
The {\it edge-vertex dominating set} and {\it vertex-edge dominating set} in a graph was introduced by Peters \cite{pete1987}. The edge-vertex dominating set and vertex-edge dominating set problems are {\tt NP-complete}, even when restricted to bipartite graphs \cite{lewi2007}. For every nontrivial tree ${\cal T}$, an upper bound on $\gamma_{ev}({\cal T})$ is $(\gamma_{t}({\cal T})+s-1)/2$ where $s$ is the number of support vertices (the vertex adjacent to a leaf) \cite{venk2018}. The \textit{total domination number} ($\gamma_t$) of a tree is equal to the \textit{ev-domination number} ($\gamma_{ev}$) plus one \cite{kris2016}. The vertex-edge dominating set problem in {\it UDG} is {\tt NP-complete} \cite{jena2020}. Also, in \cite{jena2020}, a polynomial time approximation scheme ({\tt PTAS}) is proposed. Finding $\gamma_{ve}$ even in cubic planar graphs is {\tt NP-hard} \cite{ziem2020}. The vertex-edge domination problem can be solved in linear time on block graphs \cite{paul2019}. In the same paper, it is also shown that finding $\gamma_{ve}$ in undirected path graphs is {\tt NP-complete}. Given a connected graph $G$ with $n$ vertices where $n\ge 6$, then we have $\gamma_{ve}(G)\leq \lfloor \frac{n}{3} \rfloor$\cite{zyli2019}. Boutrig et al. \cite{bout2016} gave an upper bound for the independent ve-domination number in terms of the ve-domination number for connected $K_{1,k}$-free graph with $k\geq 3$ and also gave an upper bound on the ve-domination number for connected $C_5$-free graph.  

The double vertex-edge domination was introduced by Krishnakumari et al. \cite{kris2017}. They showed that finding $\gamma_{dve}$ in a bipartite graph is {\tt NP-complete} and also proved that for every non-trivial connected graph $G$, $\gamma_{dve}(G)\geq \gamma_{ve}(G)+1$, and $\gamma_{dve}({\cal T})=\gamma_{ve}({\cal T})+1$ or $\gamma_{dve}({\cal T})=\gamma_{ve}({\cal T})+2$ for any tree ${\cal T}$. Finding $\gamma_{dve}$ in chordal graphs is {\tt NP-complete} \cite{venk2019}. They gave a linear time algorithm to find $\gamma_{dve}$ in proper interval graphs and also showed that finding $\gamma_{dve}$ in general graphs with vertices having a degree at most 5 is {\tt APX-complete}. The double version of edge-vertex domination was studied by Sahin and Sahin \cite{sahi2021}. They also gave the relationship between $\gamma_{dev}$ and $\gamma_{dve}$, $\gamma_{t}$, $\gamma_{ev}$ for trees and graphs, and also gave formulas to determine the \textit{double ev-domination number} of paths and cycles. Sahin and Sahin \cite{SAHI2020} proved that the total ev-dominating set problem is {\tt NP-hard} for bipartite graphs. They also showed that $(n-l+2s-1)/2$ is the upper bound for $\gamma^t_{ev}$ for a tree ${\cal T}$ with order $n$, $l$ leaves and $s$ supporting vertices. To the best of our knowledge, in the literature, the ev-dominating set problem is not yet studied in the context of geometric intersection graphs. 

\subsection{Our Contribution}
In this article, we study the {\it EVDS} problem on unit disk graphs. We show that the decision version of this problem is {\tt NP-complete} in UDGs. We also prove that this problem on UDG admits a polynomial time approximation scheme (PTAS). We finally present a simple 5-factor linear-time approximation algorithm.

\section{Hardness Results}
In this section, we show that the decision version of the {\it EVDS} is {\tt NP-complete}, as stated below. We describe a polynomial time reduction from the vertex cover problem, which is known to be {\tt NP-complete} in planar graphs with maximum degree 3 \cite{grey1979}, to {\it EVDS} problem on {\it UDG}. \\
\newline
\textbf{The {\it EVDS} problem on UDGs} (EVDS-UDG)\\
\textbf{Instance:} A UDG $G=(V,E)$ and a positive integer $k$.\\
\textbf{Question:} Does there exist an edge-vertex dominating set $S^{ev}$ of $G$ such that $\vert S^{ev}\vert \leq k$.\\

\begin{lemma}(\cite{vali1981})\label{lem1} 
An embedding of a planar graph $G=(V,E)$ with maximum degree $4$ in the plane is possible such that this embedding uses only $O(\lvert V \rvert ^2 )$ area and its vertices are at integer coordinates, and its edges are drawn so that they are along the grid line segments of the form $x = i$ or $y = j$, for some $i$ and $j$, where $i,j \in \mathbb{Z}^+$.
\end{lemma}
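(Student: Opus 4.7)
The plan is to build the embedding in two stages: a combinatorial stage that fixes how each edge leaves each of its endpoints, and a geometric stage that realizes those directions on the integer grid.

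In the combinatorial stage I would first fix any planar embedding of $G$, which gives a cyclic rotation of the incident edges at every vertex. Because the maximum degree is $4$, the at most four edges incident to any vertex $v$ can be assigned four distinct cardinal directions from $\{N, E, S, W\}$ consistently with this cyclic order. This yields an abstract orthogonal representation in which every edge will be realized as a path of alternating horizontal and vertical segments, and the direction in which the edge leaves each of its endpoints is fixed in advance.

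In the geometric stage I would process the vertices in an $st$-ordering $v_1, v_2, \ldots, v_{|V|}$ derived from the planar embedding, and incrementally draw the graph on the integer grid. Each new vertex $v_i$ is placed on a freshly inserted row strictly above all previously placed vertices, and the edges joining $v_i$ to its already-placed neighbors are routed downwards on grid lines with at most one bend each, interleaved among a constant number of newly inserted columns. After all $|V|$ vertices are inserted, the resulting drawing occupies an integer bounding box of $O(|V|) \times O(|V|)$, with each vertex at an integer point and every edge lying on axis-parallel grid line segments, which is the required $O(|V|^2)$ area bound.

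The principal obstacle is showing that the incremental routing in the geometric stage can always be carried out while (i) respecting the cardinal directions fixed in the combinatorial stage, (ii) preserving the planarity of the partial drawing, and (iii) using only $O(1)$ additional rows and columns per vertex. Handling this requires interleaving the new columns among the columns already used by the lower-indexed neighbors of $v_i$ in exactly the order dictated by the rotation at $v_i$; a short amortized argument then shows that the total number of rows and columns remains $O(|V|)$, which is the crux of the area bound and the technical heart of Valiant's construction.
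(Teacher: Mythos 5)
The paper offers no proof of this lemma at all: it is imported verbatim from Valiant \cite{vali1981}, with Biedl and Kant \cite{Bied1998} cited for a linear-time construction. Your proposal must therefore stand as a self-contained argument, and as such it has a genuine gap: the entire difficulty of the theorem is concentrated in the step you explicitly defer. Showing that the incremental routing can always be carried out while (i) respecting the pre-assigned cardinal directions, (ii) keeping the partial drawing planar, and (iii) consuming only $O(1)$ new rows and columns per vertex is not a finishing touch --- it \emph{is} the theorem, since the $O(|V|^2)$ area bound follows immediately once that invariant is established. You name this as ``the principal obstacle'' and appeal to ``a short amortized argument,'' but no such argument is given; nothing in the write-up rules out, for instance, a vertex one of whose edges was fixed to leave through the South or through an already occupied side, yet must be routed upward to a neighbor that will only be placed on a later, higher row.

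There is also a structural mismatch between your two stages. In the incremental, ordering-based constructions (essentially the Biedl--Kant scheme the paper cites), the side through which an edge leaves a vertex is decided \emph{during} placement, precisely because edges to higher-numbered vertices must exit through free top ports and edges to lower-numbered vertices through bottom ports; fixing all four cardinal directions in advance purely from the rotation at each vertex will in general conflict with any vertex ordering. The alternative family of proofs that does fix an orthogonal representation first (Tamassia-style) must then add a separate compaction argument to obtain integer coordinates in an $O(|V|)\times O(|V|)$ bounding box, which you also do not supply. So either drop the first stage and prove the invariant of the incremental construction (each insertion adds $O(1)$ rows and columns, planarity is preserved, and every placed vertex with pending edges to later vertices retains a free column above it), or keep the direction-fixing stage and prove both realizability and the compaction bound; in either case the missing piece is the technical core of Valiant's result, not a routine detail.
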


Biedl and Kant \cite{Bied1998} gave an algorithm that produces this kind of embedding in linear time (see Fig. \ref{fig1}).

\begin{corollary}(\cite{jena2020})\label{cor1} An embedding of a planar graph $G=(V,E)$ with $\lvert V \rvert \geq3$ and maximum degree 3 in the plane can be constructed in polynomial time, where the embedding is such that the vertices of $G$ are at $(4i,4j)$ and the edges of $G$ are drawn as a sequence of consecutive line segments along the lines $x=4i$ or $y=4j$, for some $i$ and $j$.
\end{corollary}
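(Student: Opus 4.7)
The plan is to obtain this stronger embedding directly from Lemma \ref{lem1} by a uniform scaling argument. Since every graph with maximum degree $3$ is in particular a graph with maximum degree at most $4$, Lemma \ref{lem1} already guarantees an embedding of $G$ on the integer grid in which vertices lie at points $(i,j)$ and each edge is drawn as a sequence of axis-parallel segments along grid lines $x=i$ or $y=j$; by the Biedl--Kant construction this embedding is computable in linear, hence polynomial, time.

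The next step is to apply the dilation map $(x,y)\mapsto(4x,4y)$ to the whole drawing. A vertex originally at $(i,j)$ moves to $(4i,4j)$, and any segment that lay on the line $x=i$ now lies on $x=4i$, and similarly for horizontal segments. Because dilation preserves incidences, planarity, and the property that each edge is a concatenation of axis-parallel segments, the rescaled drawing is still a valid planar orthogonal embedding of $G$ using only the grid lines $x=4i$ and $y=4j$. The total area grows from $O(|V|^2)$ to $O(16\,|V|^2)=O(|V|^2)$, so the construction remains polynomial-time (indeed, linear) in $|V|$.

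The only thing to check, and what one would normally regard as the main obstacle, is that there is no collision between distinct edges introduced by the scaling; but since the source embedding from Lemma \ref{lem1} is already a valid non-crossing orthogonal grid drawing and a homothety with positive ratio $4$ is an injective similarity of the plane, no new coincidences of vertices or edges can arise. The factor $4$ itself is not needed for correctness of the embedding; its purpose is to leave gaps of width at least $4$ between consecutive grid lines, which is what downstream gadget-based reductions (such as the one to follow for \textsc{EVDS}-UDG) will exploit when inserting subdivision vertices along each edge. Hence the corollary follows immediately from Lemma \ref{lem1} together with this trivial rescaling.
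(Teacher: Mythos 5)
Your proposal is correct and follows the standard argument underlying this corollary (which the paper simply cites from \cite{jena2020}): apply Lemma \ref{lem1} (valid since maximum degree $3$ implies maximum degree at most $4$) via the linear-time Biedl--Kant construction, then dilate the drawing by a factor of $4$, which preserves planarity and orthogonality while placing vertices at $(4i,4j)$ and edges along the lines $x=4i$, $y=4j$. Your remark that the factor $4$ exists only to create room for the subdivision points used in the subsequent reduction is also exactly the intended role of the scaling.
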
 

\begin{figure}
\includegraphics[scale=0.7]{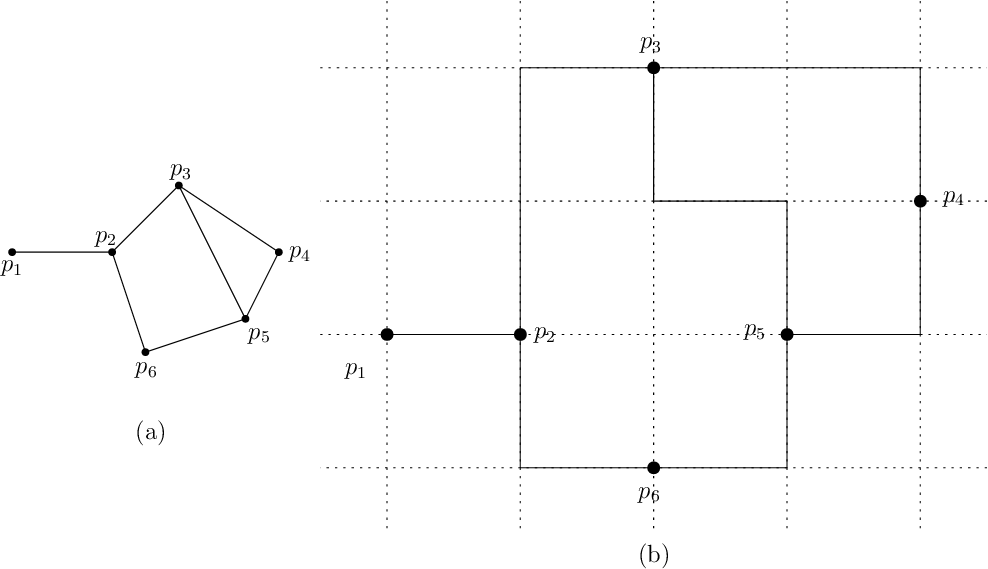}
\centering
\caption{(a) A planar graph $G$ with max degree 3, and (b) The embedding of $G$ on a grid.} \label{fig1}
\centering
\end{figure}

\begin{lemma}\label{lem2}
Let $G=(V,E)$ be an instance of the vertex cover problem with a number of edges at least 2 and a maximum degree 3. An instance $G'=(V',E')$ of {\it EVDS-UDG}  can be constructed from $G$ in polynomial time.
\end{lemma}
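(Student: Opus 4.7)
The plan is to describe a gadget construction that transforms the vertex-cover instance $G$ into a UDG instance $G'$; the numerical correspondence between vertex cover size and ev-domination number will be handled by subsequent lemmas, so here I only need a polynomial-time construction that yields a valid UDG realization together with the attendant geometric sanity checks.

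First, I would invoke Corollary~\ref{cor1} to obtain in polynomial time a grid embedding of $G$ in which every vertex sits at a point $(4i,4j)$ and every edge $e\in E$ is drawn as a polyline $\pi_e$ consisting of axis-aligned unit segments supported on grid lines of the form $x=4i$ or $y=4j$. Since distinct vertices lie at grid distance at least $4$, the length $\ell_e$ of each $\pi_e$ satisfies $\ell_e\geq 4$. For the later counting argument I would normalize the chain lengths to a chosen residue class (for instance $\ell_e\equiv 0\pmod{4}$) by replacing any offending segment of $\pi_e$ with a short rectangular detour routed in the empty grid region alongside $\pi_e$; each detour adds a bounded constant to $\ell_e$, can be inserted greedily inside the $O(|V|^2)$ drawing area guaranteed by Lemma~\ref{lem1}, and preserves planarity.

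Next, I would construct $G'$ by placing a unit-diameter disk $D_v$ centered at each vertex of $G$ and, for each edge $e=(uv)\in E(G)$, placing $\ell_e-1$ additional unit-diameter chain disks at the successive integer-distance points along $\pi_e$ from $u$ to $v$. The vertex set $V'$ consists of all these disks and $E'$ contains a pair of vertices whenever the corresponding disks intersect (tangent contact included). Consecutive disks along any chain have centers at distance exactly $1$ and are therefore adjacent, and the first (respectively last) chain disk of the chain for $e=(uv)$ is adjacent to $D_u$ (respectively $D_v$) for the same reason. The crucial step is then to rule out any other adjacencies: two chain disks belonging to edges that do not share a vertex in $G$ lie on grid lines separated by at least $4$, so their centers are at Euclidean distance $\geq 3>1$; and two chain disks belonging to chains of edges both incident to a common vertex $v$ have centers on the two perpendicular grid lines through $v$, placing them at distance $\geq\sqrt{2}>1$. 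Hence $G'$ is exactly the unit disk graph induced by the explicit family of disks produced above.

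To conclude I would verify polynomiality: by Lemma~\ref{lem1} the embedding fits in $O(|V|^2)$ area, so $\sum_e\ell_e=O(|V|^2)$, which bounds $|V'|$ and $|E'|$ polynomially and implies that the embedding, the disk placement, and the intersection tests all run in polynomial time. I expect the main obstacle to be the simultaneous enforcement of the chain-length normalization and of the non-existence of spurious intersections at high-degree vertices; the rigidity of the grid drawing given by Corollary~\ref{cor1}, together with the fact that $G$ has maximum degree $3$ (so at most three grid lines meet at any vertex of the embedding), makes both conditions manageable once a single detour convention is fixed.
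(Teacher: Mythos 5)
Your construction is essentially the paper's: obtain the Corollary~\ref{cor1} grid embedding, place unit-diameter disks at the node points and at unit-spaced points along each edge polyline (the paper's joint and added points), define $G'$ as the unit disk graph on these centers, and bound $|V'|$ and $|E'|$ via the $O(|V|^2)$ drawing area of Lemma~\ref{lem1} to get polynomial time. The only deviation, the mod-$4$ length normalization via detours, is vacuous since every endpoint and bend of the embedding has both coordinates divisible by $4$, so each $\ell_e$ is already a multiple of $4$; your explicit check that no spurious adjacencies arise is extra care that the paper leaves implicit (and quietly relies on later when it asserts degree bounds in $G'$).
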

\begin{proof}
The construction of $G'=(V',E')$ from $G=(V,E)$ is as follows:
First, using one of the algorithms discussed in \cite{hopc1974,itai1982} we embed the graph $G=(V, E)$ into a grid of size $4n \times 4n$ such that each edge of $E$ is composed of a sequence of horizontal or vertical line segment(s), each of whose length is four units long. The points $\{p_1,p_2,\dots,p_n\}$ are referred to as the \textit{node points} in the embedding with respect to the vertex set of $G$ (see Fig. \ref{fig1}(a), Fig. \ref{fig1}(b) and the corresponding {\it UDG} $G'$ in Fig. \ref{fig2}). In the embedded graph, for each edge of length greater than four units, we add a joint point to join two line segments in the embedding other than the node points. Name these points as the \textit{joint points} (see empty circles in Fig. \ref{fig2}). Then for each line segment with \textit{joint points} as both of its end points in the embedding, we add three extra points such that each of these extra points is at a distance of 1 unit from its neighbor extra point(s) placed on the same segment, at least 1 unit from the corresponding joint points. Similarly, for each line segment with a \textit{node point} as its endpoint, we add four extra points each at a distance of 0.8 unit from its neighbor extra point(s), also from the end points of the segment on which we are placing the extra points. Name these extra points (from both the above cases) as the \textit{added points} (see filled square points in Fig. \ref{fig2}). 

Let $A$ be the set of added points and $J$ be the set of joint points. We construct a UDG $G'=(V',E')$  where the vertex set $V'=V\cup A\cup J$, and there is an edge between two vertices of $V'$ if and only if the distance between them is at most 1 unit (see Fig. \ref{fig2}). If $l$ is the total number of line segments in the embedding, then $\lvert A \rvert \leq 4l$ and $\lvert J \rvert \leq (l- \lvert E \rvert) $. It follows from Lemma \ref{lem1} that $l$ is at most $O(n^2)$. Clearly, the graph defined by the intersection of unit disks centered at points in $V'$ is a unit disk graph. Since both the sets $\lvert V' \rvert $ and $\lvert E' \rvert $ are bounded by $O(n^2)$, we can construct $G'$ from $G$ in polynomial time. 
\end{proof}

\begin{figure}
\centering
\includegraphics[scale=0.7]{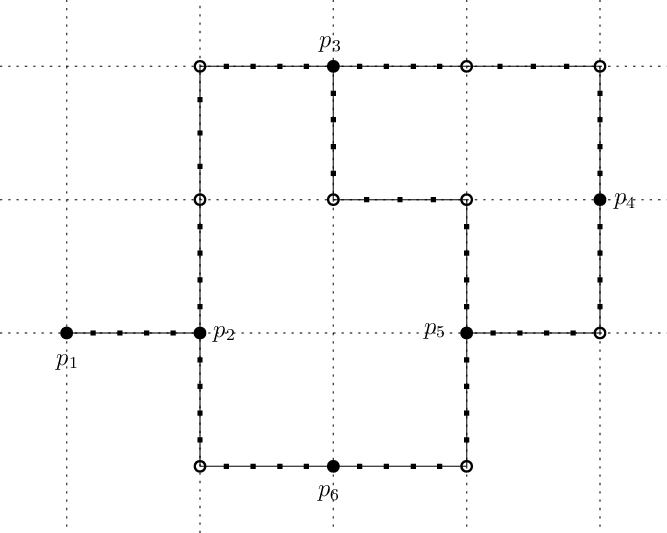}
\caption{A unit disk graph construction from embedding} \label{fig2}
\centering
\end{figure}	

\begin{lemma}\label{lem3}

EVDS-UDG $\in{\tt NP}$.
\end{lemma}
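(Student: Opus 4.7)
The plan is to exhibit a short certificate and a verification procedure that runs in time polynomial in the size of the input $(G', k)$. The natural certificate is the purported ev-dominating set itself, namely a subset $S \subseteq E'$ with $|S| \le k$, encoded as a list of edges. Since $|E'| = O(n^2)$ by the construction in Lemma \ref{lem2} and $k \le |E'|$, this certificate has size polynomial in the size of $G'$.

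The verification proceeds in two steps. First, I would confirm in time $O(|S|)$ that the given list consists of edges of $G'$ and that $|S| \le k$. Second, I would check that every vertex of $G'$ is ev-dominated by some edge of $S$. For this, I observe that an edge $e = (x,y) \in E'$ ev-dominates a vertex $v$ if and only if $v \in N_{G'}[x] \cup N_{G'}[y]$, i.e., $v$ is within graph-distance at most $1$ from an endpoint of $e$. Therefore, for each $v \in V'$, I iterate over each edge $(x,y) \in S$ and test membership of $v$ in $N_{G'}[x] \cup N_{G'}[y]$ using the adjacency relation of $G'$. Each such test takes $O(|V'|)$ time (or $O(1)$ with a precomputed adjacency matrix), so the total verification cost is $O(|V'|^2 \cdot |S|)$, which is polynomial in the size of $G'$.

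Since a polynomially bounded certificate can be verified in polynomial time, the EVDS-UDG problem is in \texttt{NP}. There is no real obstacle here: the only point worth stating cleanly is the equivalent reformulation of ev-domination in terms of closed neighborhoods of endpoints, which immediately yields the efficient check.
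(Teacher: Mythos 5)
Your proposal is correct and follows essentially the same approach as the paper: the certificate is the candidate edge set itself, and verification checks the size bound and that every vertex is ev-dominated, which is polynomial. Your reformulation that an edge $(x,y)$ ev-dominates $v$ exactly when $v\in N_{G'}[x]\cup N_{G'}[y]$ just makes the paper's verification step explicit.
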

\begin{proof}
Given a subset $S^{ev}\subseteq E$ and a positive integer $k$, we can verify that $S^{ev}$ is an edge-vertex dominating set of size at most $k$ in polynomial time by checking whether each vertex $v\in V$ is {\it ev}-dominated by an edge $e\in S^{ev}$ all in $O(\lvert V'\rvert \lvert E'\rvert)$ time. 
\end{proof}

We prove the {\tt NP-hard}ness of the {\it EVDS-UDG} problem by reducing the decision version of the {\it vertex cover} problem on a planar graph with maximum degree 3 to the {\it EVDS-UDG} problem. Let $G=(V,E)$ be a planar graph with a maximum degree of 3. Then from Lemma \ref{lem2}, we can construct an instance $G'=(V',E')$ of {\it EVDS-UDG} in polynomial time.

\begin{figure}
\centering
\includegraphics[scale=0.7]{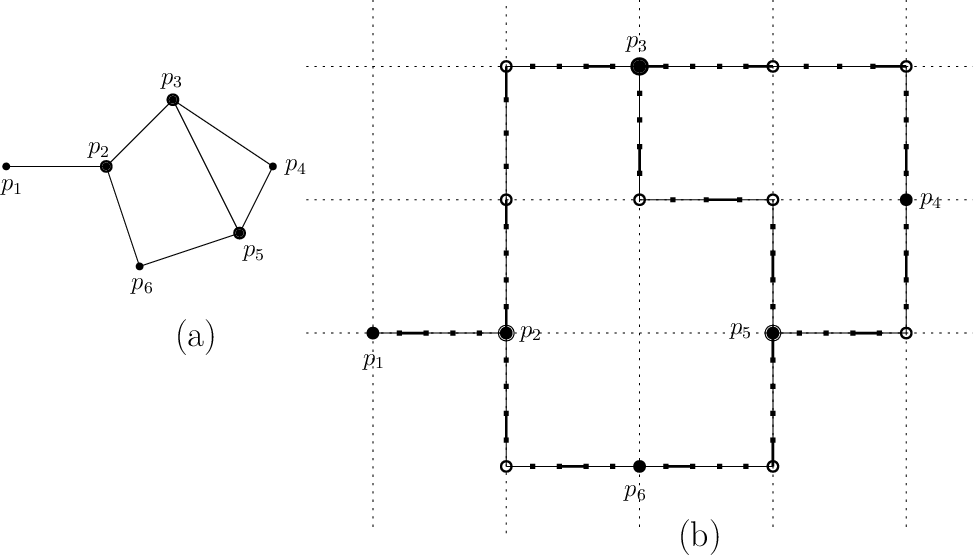}
\caption{(a) Vertex cover of $G$. (b) {\it EVDS} of UDG.} \label{fig3}
\centering
\end{figure}	

\begin{lemma}\label{lem4}
$G$ has a vertex cover of size at most $k$ if and only if $G'$ has an edge-vertex dominating set of size at most $k+l$.
\end{lemma}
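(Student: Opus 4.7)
The plan is to establish the biconditional by constructing an EVDS from a vertex cover and vice versa; the backward direction is the principal technical burden.

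For the forward direction, given a vertex cover $C$ of $G$ with $|C|\le k$, I would build $S^{ev}$ in two phases. In the first phase, for every $v\in C$ and every segment $s$ incident to $p_v$ in the embedding, include the edge $(p_v,a_1(s))$, where $a_1(s)$ is the added point of $s$ nearest $p_v$; since $G$ has maximum degree $3$, this contributes at most $\sum_{v\in C}\deg_G(v)\le 3|C|\le 3k$ edges. In the second phase, for each line segment $s$ include its middle edge $(a_2(s),a_3(s))$, labelled so that $a_3(s)$ lies on the side of $s$ pointing away from a fixed cover endpoint of the $G$-edge containing $s$; this contributes exactly $l$ edges. A case analysis over the four vertex types of $G'$ (added points, joints, nodes of cover vertices, and nodes of non-cover vertices) confirms ev-domination: added points are covered by their segment's phase-two middle edge; each joint is covered by the middle edge of the adjacent segment oriented toward it; each $p_v$ with $v\in C$ is covered by a phase-one edge; and each $p_v$ with $v\notin C$ is covered by the middle edge of the last segment on the embedded path of any incident $G$-edge.

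For the backward direction, suppose $G'$ has an EVDS $S^{ev}$ with $|S^{ev}|\le 3k+l$. The distance conditions used in the construction (added points at unit intervals along their segment and at distance at least one from every node and joint) imply that no edge of $G'$ connects vertices across different segments, so $S^{ev}$ partitions across the $l$ segments. Moreover, the middle added point $a_2(s)$ of each segment $s$ can be ev-dominated only by one of the four internal edges of $s$, giving $|S^{ev}\cap E(s)|\ge 1$ for every $s$ and hence an ``excess'' $r:=|S^{ev}|-l\le 3k$. To extract the cover, I would first normalize $S^{ev}$ (without increasing its size) into a canonical form in which every segment contains at least one of its middle edges $(a_1,a_2)$ or $(a_2,a_3)$, via local swaps that replace $e_1(s)$ or $e_4(s)$ with the appropriate middle edge and compensate for any domination loss using pre-existing extra edges at the adjacent node. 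In this canonical form, the remaining ``extra'' edges of $S^{ev}$ lie near node points, and I would set $C$ to be the set of $v\in V(G)$ for which $S^{ev}$ contains an edge incident to $p_v$. A charging argument based on $\Delta(G)\le 3$ gives $|C|\le k$, and if some $uv\in E(G)$ had $u,v\notin C$, then along the embedded path of $uv$ no edge of $S^{ev}$ would be incident to $p_u$ or $p_v$, forcing un-dominated endpoints in the canonical form and contradicting the EVDS property.

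The hardest part will be the backward direction, specifically executing the canonical-form transformation cleanly and completing the charging. Local swaps like $e_1(s)\to e_2(s)$ can un-dominate the $a_1$ of a sibling segment meeting at the same node, so the compensation scheme must carefully track which excess edges absorb such costs; and the final counting, which must attribute roughly three excess edges to each vertex placed in $C$, depends on both the degree-$3$ bound and the geometric absence of inter-segment edges in $G'$. I expect these two steps (normalization and discharging) to occupy the bulk of the proof.
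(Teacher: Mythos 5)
Your forward direction is essentially the paper's construction (all edges of $G'$ incident to the cover node points, at most $3k$ of them, plus one ``third edge'' per line segment oriented away from a chosen cover endpoint, exactly $l$ of them), and that part is fine. The backward direction, however, has a genuine gap in both of its key steps. First, with $C$ defined as the set of $v$ for which $S^{ev}$ contains \emph{some} edge incident to $p_v$, the bound $|C|\le k$ does not follow from any charging: a vertex enters $C$ on the strength of a single incident edge, and since each such edge accounts for only one unit of excess over the per-segment minimum of one edge (which your own argument via $a_2(s)$ establishes), the most you can conclude is $|C|\le 3k$. Nothing forces three excess edges to sit at each member of $C$, so the discharging you defer to ``the bulk of the proof'' cannot be completed as described. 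Second, the covering claim also fails under your canonical form: if $u,v\notin C$, the node points $p_u,p_v$ need not be undominated, because a node point is ev-dominated by the second edge $(a_1,a_2)$ of any incident segment, and your normalization explicitly steers $S^{ev}$ toward exactly such middle edges. Hence no contradiction arises, and an edge of $G$ could be left uncovered by $C$.

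The paper resolves both problems with a normalization in the opposite direction and a stricter cover rule. For each embedded path $p_i\rightsquigarrow p_j$, if both endpoints are ev-dominated only by edges of the path itself, then the path must contain strictly more than one chosen edge per segment (an edge of a path ev-dominates only four of its vertices, while a path of $L$ segments has $4L+1$ vertices), and this surplus is re-used to reselect edges so that \emph{all} edges incident to one of $p_i,p_j$ lie in $S^{ev}$, without increasing $|S^{ev}|$ beyond $3k'+l$. The vertex cover is then taken to be the node points \emph{all} of whose incident edges are chosen; with that definition the covering property is immediate (one endpoint of every embedded edge of $G$ has all its incident edges selected), and each cover vertex is charged its up-to-three incident edges over and above the $l$ per-segment edges, which is what yields the bound $k'$. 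To salvage your outline you would need to normalize toward the node points rather than toward middle edges, and require all incident edges (not just one) for membership in the cover.
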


\begin{proof}
Let $D\subseteq V$ be a vertex cover of $G$ of cardinality at most $k$. Let $D'$ be the set of vertices all happen to be the \textit{node point}s of $G'$ that correspond to the vertices in $D$.

{\it Necessary}: Now, for every vertex $v$ in $D'$, choose any one edge of $G'$ that is incident to $v$ where the path does not lead to a pendant vertex in $G$. Represent these chosen edges as the set $N'$. Since $\lvert D'\rvert \leq k$, the cardinality of $N'$ is at most $k$. We can see that for any edge $(p_i, p_j)\in E$ of $G$, we have a simple path in $G'$, consisting of at least one line segment and only added points and joint points between $p_i$ and $p_j$. Let us introduce the notation $p_i\rightsquigarrow p_j$ to denote this path for any pair of node points $(p_i, p_j)\in E$, where the node point $p_i$ is a vertex in $D'$. Next, traverse every such path exactly once starting at a vertex $v\in D'$, and initially choose the fourth edge from $(v, u)\in N'$ (not counting $(v,u)$) and then on every fourth edge until reaching $p_j$. We repeat this for every vertex $v\in D'$ except for the paths that will lead to pendant vertices of $G$, which will ensure that every path $p_i\rightsquigarrow p_j$ is traversed exactly once because $D$ is a vertex cover. Similarly, for every path $p_i\rightsquigarrow p_j$, where $p_i\in D$ and $p_j$ is a pendent vertex, choose the second edge of $G'$ from $p_j$ and continue selecting every fourth edge until reaching $p_i$. Let $N''$ be the set of all these chosen edges. Observe that the edges in $N'\cup N''$ are chosen such that there is at least one edge and at most two edges contained in $N'\cup N''$ for each of $l$ segments (see the darkened edges in Fig. \ref{fig3}(b)). Moreover, the way edges in $N'\cup N''$ are chosen ensures that every two added or joint or node points between any two consecutive of these edges are {\it ev}-dominated by them. The cardinality of $N''$ is $l$ since each line segment consists of at most four added points in the embedding. Therefore, $N'\cup N''$ is an {\it ev}-dominating set for $G'$ and $(\lvert N' \rvert +\lvert N'' \rvert) \leq k+l$.


{\it Sufficiency}: To prove the sufficiency, consider any edge-vertex dominating set $S^{ev}\subseteq E'$ for $G'$, of size at most $k+l$. We know that for any node point $p_i\in V'$, the degree of $p_i$ is at most 3 in both $G$ and $G'$. Let $p_i\rightsquigarrow p_{j_t}$ ($t=1,2,3$) be the three paths in $G'$ as defined above, i.e., all the other vertices through which the path $p_i\rightsquigarrow p_{j_t}$ traverses are only the joint points and added points.
Let $\zeta(p_i\rightsquigarrow p_{j_t})$ be the subset of edges of $E'$ that appear in the path $p_i\rightsquigarrow p_{j_t}$. For any node point $p_i$ in $G'$, let $\ell_t= \lvert\zeta(p_i\rightsquigarrow p_{j_t})\cap S^{ev}\rvert$ be the number of edges of that path contained in the {\it EVDS} $S^{ev}$. Let $\jmath_t$ be the number of line segments that constituted the path $p_i\rightsquigarrow p_{j_t}$ in the embedding. Observe that $\ell_t$ is equal to $\jmath_t$ or $\jmath_t+1$ due to the construction of $G'$ from the embedding of $G$. Now, identify a node point $p_i$ in $G'$ such that the degree of $p_i$ is at least 2 and exactly one path $p_i\rightsquigarrow p_{j_1}$ has its $\ell_1$ equal to $\jmath_1+1$ and the remaining paths $p_i\rightsquigarrow p_{j_2}$ (and $p_i\rightsquigarrow p_{j_3}$) have their counts $\ell_2=\jmath_2$ (and $\ell_3=\jmath_3$). Pick this node point $p_i$ into a vertex cover $D$. Remove the part of $G'$ induced by these paths $p_i\rightsquigarrow p_{j_1}$, $p_i\rightsquigarrow p_{j_2}$, and $p_i\rightsquigarrow p_{j_3}$ (however, retain the node points $p_{j_1}$, $p_{j_2}$, and $p_{j_3}$ in the remaining $G'$). This will guarantee that the edges of $G$ corresponding to these paths are covered by the vertex $p_i\in D$. Repeat this procedure on the remaining $G'$. To start with, there must exist at least one such $p_i$; otherwise, the sum $\sum_{p_i\rightsquigarrow p_{j_t}}\ell_t$ over all paths in $G'$ would exceed $k+l$, a contradiction. Hence, $D$ is a vertex cover for $G$ and $\lvert D\rvert\leq k$.

The construction of $S^{ev}$ from $D$ and vice versa both take polynomial time. Thus the lemma follows.
\end{proof}

\begin{theorem}
The EVDS-UDG problem is {\tt NP-complete}.
\end{theorem}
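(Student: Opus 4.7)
The plan is to assemble the pieces that have already been proved into a standard \texttt{NP-complete}ness argument: membership in \texttt{NP} from Lemma~\ref{lem3}, and \texttt{NP-hard}ness via a polynomial-time reduction from the vertex cover problem on planar graphs of maximum degree~$3$, which is known to be \texttt{NP-hard} by Garey and Johnson~\cite{grey1979}.

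First, I would cite Lemma~\ref{lem3} to conclude that EVDS-UDG lies in \texttt{NP}, since a candidate set $S^{ev}\subseteq E'$ can be verified to be an ev-dominating set of the prescribed size in polynomial time. Next, given an instance $(G,k)$ of vertex cover with $G$ a planar graph of maximum degree~$3$, I would invoke Lemma~\ref{lem2} to build the corresponding UDG instance $G'=(V',E')$ in polynomial time, together with the integer $l$ denoting the total number of line segments used in the grid embedding of $G$; this $l$ is computed directly from the embedding produced via Corollary~\ref{cor1} and is bounded by $O(|V|^2)$. I would then pair $G'$ with the threshold $k' = 3k+l$ to form the EVDS-UDG instance.

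The correctness of the reduction is exactly Lemma~\ref{lem4}: $G$ admits a vertex cover of size at most $k$ if and only if $G'$ admits an ev-dominating set of size at most $3k+l$. Since the construction of $G'$ and the computation of $l$ are both polynomial in $|V|$, and $k'$ is an integer of polynomial size in the input, the reduction is a legitimate polynomial-time many-one reduction. Combining this with membership in \texttt{NP} yields the theorem.

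I do not expect a genuine obstacle here, as all the hard work has been done in Lemmas~\ref{lem2}--\ref{lem4}; the only care needed is to verify that the threshold $3k+l$ is polynomially bounded and polynomial-time computable from the input, which follows from Lemma~\ref{lem1} (giving $l = O(|V|^2)$) and from the linear-time embedding algorithm of Biedl and Kant~\cite{Bied1998}. Hence the reduction is clean and the \texttt{NP-complete}ness of EVDS-UDG follows immediately.
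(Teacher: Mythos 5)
Your proposal is correct and follows essentially the same route as the paper, which simply cites Lemma~\ref{lem3} for membership in {\tt NP} and Lemma~\ref{lem4} (built on the construction of Lemma~\ref{lem2}) for {\tt NP-hard}ness via the vertex cover reduction. Your additional remarks on the polynomial computability of the threshold $3k+l$ are a harmless elaboration of what the paper leaves implicit.
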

\begin{proof}
Follows from Lemmas \ref{lem3} and \ref{lem4}. 
\end{proof}

\section{Polynomial Time Approximation Scheme}\label{poly}
In this section, we propose a PTAS for the {\it EVDS} set problem in a UDG. It is based on the concept of $m$-separated collection of subsets, which was introduced by Nieberg and Hurink \cite{nieb2006}. This concept was used by many other authors to develop PTAS (for e.g., the Roman dominating set \cite{SHAN2007}, minimum Liar's dominating set \cite{JALL2020}, vertex-edge dominating set \cite{jena2020}). However, we adopted that concept here quite differently from these as we have to select edges to dominate vertices in the {\it EVDS} problem. The {\it EVDS} is a domination model with some relevant applications (as we have discussed in Section \ref{intro}) different from other related models such as {\it VEDS}, {\it TDS}, {\it PDS}. Let $G=(V,E)$ be a {\it UDG}. Let $h(e_1,e_2)$ denote the minimum number of edges in a simple path between the endpoints of the edges $e_1$ and $e_2$. Consider any two subsets $E_1\subseteq E$ and $E_2\subseteq E$, $h(E_1,E_2)$ is defined as the minimum number of edges between any two edges $e_1\in E_1$ and $e_2\in E_2$.
We use $EVD(A)$ to denote an ev-dominating set and $EVD_{opt}(A)$ to denote the optimal ev-dominating set of the edge-induced subgraph corresponding to $A(\subseteq E)$ (i.e., the subgraph induced by the set of edges $A(\subseteq E)$ and the endpoints of edges in $A$).
 
Let $S$ be a set of $k$ pairwise disjoint subsets of $E$, i.e., $S_i\subset E$ for $i=1,2,\dots,k$. If $h(S_i,S_j)\geq m$, for $1\leq i$, $j\leq k$ and $i\neq j$, then $S$ is called as the \textit{$m$-separable collection} of subsets of $E$ (see Fig. \ref{fig4} for $m=4$).

\begin{lemma} \label{lemp1}
In a graph $G=(V,E)$, if $S=\{S_1,S_2,\dots,S_k\}$ is a $4$-separated collection of $k$ subsets of $E$, then \[ \sum_{i=1}^{k}\lvert EVD_{opt}(S_i) \rvert \leq \lvert EVD_{opt}(E) \rvert. \]
\end{lemma}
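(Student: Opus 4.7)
The plan is to start from an optimal ev-dominating set $D$ of $G$ with $|D|=|EVD_{opt}(E)|$ and to charge, for each $S_i$, a disjoint portion $D_i\subseteq D$ from which we can extract an ev-dominating set of the $S_i$-subgraph of size at most $|D_i|$. Concretely, for each $i\in\{1,\dots,k\}$, define
\[
D_i=\{\,e\in D : e\text{ ev-dominates at least one vertex of }V(S_i)\,\}.
\]
The first key step is to use the $4$-separation hypothesis to show that the family $\{D_i\}_{i=1}^{k}$ is pairwise disjoint. Suppose for contradiction that some $e\in D$ lies in both $D_i$ and $D_j$ with $i\ne j$, so $e$ ev-dominates $v\in V(S_i)$ and $w\in V(S_j)$. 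Choose $e_i\in S_i$ incident to $v$ and $e_j\in S_j$ incident to $w$. From the definition of ev-domination, $v$ lies within one edge of some endpoint of $e$, and similarly for $w$; concatenating the at-most-one-edge path from $v$ to $e$, the edge $e$ itself, and the at-most-one-edge path from $e$ to $w$ produces a simple path of at most three edges between an endpoint of $e_i$ and an endpoint of $e_j$. Hence $h(e_i,e_j)\le 3$, contradicting $h(S_i,S_j)\ge 4$. Simplicity of the path is a free consequence of $V(S_i)\cap V(S_j)=\emptyset$, itself implied by the $4$-separation.

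Next, for each $i$ I will convert $D_i$ into an ev-dominating set $D_i'\subseteq S_i$ of the edge-induced subgraph corresponding to $S_i$ with $|D_i'|\le|D_i|$. The construction is a local replacement: every edge of $D_i\cap S_i$ is kept, and every $e\in D_i\setminus S_i$ is replaced by an edge of $S_i$ incident to some vertex $v_e\in V(S_i)$ that $e$ ev-dominates. I will then argue that each $u\in V(S_i)$ is ev-dominated inside the $S_i$-subgraph by $D_i'$: $u$ is ev-dominated in $G$ by some $e\in D_i$, and the edge placed into $D_i'$ in place of $e$ lies close enough to $u$ within $S_i$ to ev-dominate it via edges of $S_i$. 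Combining this with the pairwise disjointness proved in the first step,
\[
\sum_{i=1}^{k}|EVD_{opt}(S_i)| \;\le\; \sum_{i=1}^{k}|D_i'| \;\le\; \sum_{i=1}^{k}|D_i| \;\le\; |D| \;=\; |EVD_{opt}(E)|,
\]
which is the desired inequality.

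I expect the main obstacle to be the replacement step: an edge $e\in D\setminus S_i$ can ev-dominate several vertices of $V(S_i)$ through neighbors lying outside $S_i$, so justifying that a single substitution edge from $S_i$ suffices to ev-dominate all of them requires a careful local analysis of how an external edge can reach $V(S_i)$. The first step, by contrast, is a clean path-length argument, and it is precisely there that the constant $4$ in $h(S_i,S_j)\ge 4$ is sharp: the path we construct has length up to $3$, so any smaller separation would not force a contradiction.
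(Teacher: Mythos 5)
Your first step is correct, and it is the same charging strategy as the paper's proof: the paper forms $R_i=S_i\cup A_i$ (where $A_i$ is the set of edges adjacent to edges of $S_i$), uses the $4$-separation to get pairwise disjointness of the $R_i$, and charges $EVD_{opt}(E)\cap R_i$ to $S_i$; your $D_i$ plays exactly that role, and your length-at-most-$3$ path argument is in fact a more careful justification, because an edge that ev-dominates a vertex of $V(S_i)$ can be two adjacency steps away from $S_i$, hence outside $S_i\cup A_i$ — defining $D_i$ directly by the domination relation sidesteps that issue, and the $4$-separation is exactly what the $\le 3$ bound requires. (Simplicity of the path needs no hypothesis: any walk between two distinct vertices contains a simple path of no greater length.)

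The gap you flag in the replacement step is genuine, and it cannot be closed as stated: a single edge $e\in D\setminus S_i$ may ev-dominate endpoints of edges of $S_i$ that are far apart inside the subgraph induced by $S_i$. For instance, let $S_i$ contain two edges $(a,b)$ and $(c,d)$ lying in different components of that subgraph, while $G$ has an edge $(u,v)$ with $u$ adjacent to $a$ and $b$ and $v$ adjacent to $c$ and $d$; then $e=(u,v)$ ev-dominates all of $a,b,c,d$, but no single edge of $S_i$ ev-dominates all four within the subgraph, so no size-preserving $D_i'\subseteq S_i$ exists. The way out is that $EVD_{opt}(S_i)$ in this lemma has to be read the way the paper's own proof uses it: as a minimum-cardinality set of edges of $G$ (not necessarily edges of $S_i$ — note the paper compares it with $EVD_{opt}(E)\cap R_i$, which contains edges outside $S_i$) that ev-dominates every vertex of the subgraph corresponding to $S_i$. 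Under that reading your $D_i$ already witnesses $|EVD_{opt}(S_i)|\le |D_i|$, because every vertex of $V(S_i)$ is ev-dominated by some edge of $D$, which by your definition lies in $D_i$; the replacement step should simply be dropped, and then your disjointness argument plus the summation completes the proof.
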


\begin{proof}
Let $A_i$ be the set of edges that are adjacent to edges of $S_i$ for each $i=1,2,\dots,k$ and $R_i$ the set of edges such that $R_i=S_i\cup A_i$. The edges in sets $R_1,R_2,\dots,R_k$ are pairwise disjoint, since the set $S$ is a 4-separated collection of subsets of edges i.e., $(R_i\cap R_j)=\emptyset$, where $i\neq j$. Hence, the edges of $EVD_{opt}(E)\cap R_i$ will ev-dominate every vertex in $S_i$, since $EVD_{opt}(E)$ will ev-dominate every vertex $v\in V$. 
On the other hand, also $EVD_{opt}(S_i)\subset R_i$ ev-dominates every vertex of $S_i$, with a minimum number of edges of $G$. This implies that $\vert EVD_{opt}(S_i)\vert \leq \vert EVD_{opt}(E)\cap R_i \vert$. For all $k$ subsets of edges in the 4-separated collection $S$, we get
\begin{align*}
\sum_{i=1}^{k}\vert EVD_{opt}(S_i)\vert  \leq \sum_{i=1}^{k}\vert (EVD_{opt}(E)\cap R_i) \vert 
 \leq \vert EVD_{opt}(E)\vert.
\end{align*}
\end{proof}

The above Lemma \ref{lemp1} states that a 4-separated collection of subsets of edges $S$ will give a lower bound on the cardinality of an {\it EVDS}. Hence, we can get an approximation for the {\it EVDS} in $G$, if we are able to enlarge $S_i$ to subsets $Q_i\subset E$, in such a way that {\it EVDS} of expansions are bounded locally and dominate every $v\in V$ globally. 

\begin{figure}[!htb]
\centering
\includegraphics[scale=0.65]{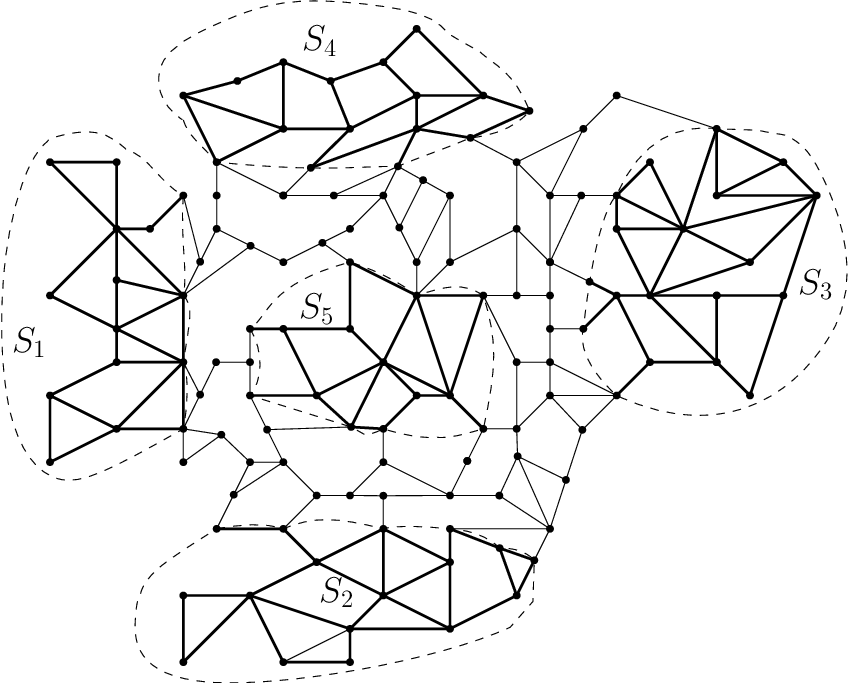}
\caption{4-Separable collection of edge sets $S=\{ S_1,S_2,S_3,S_4,S_5 \}$} \label{fig4}
\centering
\end{figure}

\begin{lemma}\label{lemp2}
In a graph $G=(V,E)$, let $S=\{S_1,S_2,\dots,S_k \}$ be a 4-separated collection of subsets of edges and $Q=\{Q_1,Q_2,\dots,Q_k \}$ be a collection of subsets of $E$ with $S_i\subseteq Q_i$ for every $i=1,2,\dots,k$. If there is  a $\rho\geq 1$ such that \[ \lvert EVD_{opt}(Q_i) \rvert \leq \rho\lvert EVD_{opt}(S_i) \rvert  \] holds for every $i=1,2,\dots, k$, and if $\bigcup_{i=1}^{k} EVD_{opt}(Q_i)$ is an edge-vertex dominating set of $G$, then $\sum_{i=1}^{k}\lvert EVD_{opt}(Q_i) \rvert $ is a $\rho$-approximation of minimum {\it EVDS} set of $G$.
\end{lemma}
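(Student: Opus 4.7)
The plan is to show that the union $\bigcup_{i=1}^k EVD_{opt}(Q_i)$ (which is given to be a feasible ev-dominating set of $G$) has cardinality at most $\rho \cdot |EVD_{opt}(E)|$, by chaining together the local $\rho$-bound assumption with the global lower bound provided by Lemma \ref{lemp1}.

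First I would observe that, since $\bigcup_{i=1}^{k} EVD_{opt}(Q_i)$ is an ev-dominating set of $G$ by hypothesis, any optimal ev-dominating set of $G$ satisfies
\[
|EVD_{opt}(E)| \;\leq\; \Bigl|\bigcup_{i=1}^{k} EVD_{opt}(Q_i)\Bigr| \;\leq\; \sum_{i=1}^{k} |EVD_{opt}(Q_i)|,
\]
so the quantity $\sum_{i=1}^{k}|EVD_{opt}(Q_i)|$ is a legitimate upper bound on an actual feasible solution returned by the algorithm. The content of the lemma, therefore, is really the reverse direction: that this sum is at most $\rho$ times the optimum.

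Next I would apply the hypothesis $|EVD_{opt}(Q_i)|\leq \rho\,|EVD_{opt}(S_i)|$ term by term and sum over $i$ to obtain
\[
\sum_{i=1}^{k} |EVD_{opt}(Q_i)| \;\leq\; \rho \sum_{i=1}^{k} |EVD_{opt}(S_i)|.
\]
Since $S=\{S_1,\dots,S_k\}$ is a $4$-separated collection of subsets of $E$, Lemma \ref{lemp1} immediately yields $\sum_{i=1}^{k} |EVD_{opt}(S_i)| \leq |EVD_{opt}(E)|$. Substituting this bound into the previous inequality gives
\[
\sum_{i=1}^{k} |EVD_{opt}(Q_i)| \;\leq\; \rho \, |EVD_{opt}(E)|,
\]
which is exactly the $\rho$-approximation statement claimed by the lemma.

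There is no real obstacle here; the proof is essentially a two-line chaining of the local-to-global bound (hypothesis) with the separation-based lower bound (Lemma \ref{lemp1}). The only subtle point worth making explicit in the write-up is that feasibility of the output is guaranteed by the assumption that $\bigcup_{i=1}^{k} EVD_{opt}(Q_i)$ ev-dominates every vertex of $G$, so the $\rho$-approximation is with respect to an actual ev-dominating set rather than merely a formal sum. I would conclude with the remark that this lemma is exactly the template one uses in the shifting/expansion framework of Nieberg and Hurink: the separation constant $4$ is chosen precisely so that an edge selected to ev-dominate inside $R_i = S_i \cup A_i$ cannot contribute to ev-domination inside any $R_j$ with $j\neq i$, which is what makes the disjointness argument of Lemma \ref{lemp1} legitimate.
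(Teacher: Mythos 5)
Your proposal is correct and follows essentially the same argument as the paper: sum the hypothesis $|EVD_{opt}(Q_i)|\leq \rho|EVD_{opt}(S_i)|$ over $i$ and apply Lemma \ref{lemp1} to bound $\sum_i |EVD_{opt}(S_i)|$ by $|EVD_{opt}(E)|$. Your added remark on feasibility of the union is a harmless (and sensible) elaboration, but it does not change the route.
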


\begin{proof}
From Lemma \ref{lemp1} we have,
\[ \sum_{i=1}^{k}\lvert EVD_{opt}(S_i) \rvert \leq \vert EVD_{opt}(E)\vert . \]
Hence, $\sum_{i=1}^{k}\vert EVD_{opt}(Q_i)\vert \leq \rho \sum_{i=1}^{k}\vert EVD_{opt}(S_i)\vert \leq \rho \vert EVD_{opt}(E)\vert  $. 

\end{proof}
In the following section, we discuss a procedure to construct the subsets $Q_i\subset E$, that contains a 4-separated collection of edges $S_i\subset Q_i$, in such a way that a local $(1+\epsilon)$-approximation can be guaranteed. The union of the respective local {\it EVDS} will ev-dominate the entire vertex set of $G$, which results in a global $(1+\epsilon)$-approximation for the {\it EVDS} problem.

\subsection{Subset Construction}

Here, we discuss the construction of the 4-separated collection of subsets of edges, $S=\{S_1,S_2,\dots,S_k \}$ and the respective enlarged subsets $Q=\{Q_1,Q_2,\dots,Q_k\}$ of $E$ such that $S_i\subseteq Q_i$ for every $i=1,2,\dots,k$. The basic idea of the algorithm is as follows. 
We start with an arbitrary edge $e\in E$ and consider the $r$-th edge neighborhood of $e$, for $r=0,1,2,\dots$, with $N_e^0[e]=e$. We compute the {\it EVDS} for these edge neighborhoods until the following condition holds
\begin{equation}\label{eq1}
\vert EVD(N_e^{r+4}[e])\vert > \rho\vert EVD(N_e^r[e])\vert 
\end{equation}
Let $r_1$ be the smallest $r$ that violates the above inequality (\ref{eq1}). Let $S_1=N_e^{r_1}[e]$, $Q_1=N_e^{r_{1}+4}[e]$. Then iteratively, let $S_i=N_e^{r_i}[e]$, $Q_i=N_e^{r_i+4}[e]$, $E_{i+1}=E_i \backslash (N_e^{r_i+4}(e))$ for $i=1, 2, \ldots, k$, where $E_1=E$ and $k$ is such that $E_{k+1}=\emptyset$. We follow this procedure iteratively for each graph induced by $E_{i+1}$ and  until $E_{i+1}=\emptyset$, finally returning the sets $S=\{S_1,S_2,\dots,S_k\}$ and $Q=\{Q_1,Q_2,\dots,Q_k\}$, where $r_2, r_3, \ldots, r_k$ are the smallest values of $r$ violating inequality (\ref{eq1}), corresponding to the 2nd, 3rd, \ldots, $k$th iteration of the above edge-neighborhood growing procedure.

We find the edge-vertex dominating set of the $r$-edge neighborhood $EVD(N_e^r[e])$ of an edge $e$, with respect to the graph $G$ as follows. Find a maximal matching $M$ for the graph induced by the edges of $N_e^r[e]$. We can observe that the edges in $M$ form an edge-vertex dominating set for the graph induced by $N_e^r[e]$. Hence, as the following lemma says, $EVD(N_e^r[e])=M$.

\begin{lemma}\label{lemp3}
A maximal matching $M$ of the graph $G'=(V',E')$ induced by the edges in $N_e^r[e]$, is an {\it EVDS} of $N_e^r[e]$.
\end{lemma}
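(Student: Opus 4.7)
The plan is to argue directly from the definitions: an edge set $S$ is an EVDS iff every vertex $v$ is either an endpoint of some edge in $S$, or an endpoint of some edge adjacent to an edge in $S$. So I will fix an arbitrary vertex $v \in V'$ and produce a witness edge in $M$ that ev-dominates it, splitting on whether $v$ is matched by $M$ or not.

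In the easy case, suppose $v$ is incident to some edge $f \in M$. Then $v$ is trivially ev-dominated by $f$, and there is nothing more to prove. Note that because $G'$ is defined as the edge-induced subgraph of $N_e^r[e]$, every vertex of $V'$ is an endpoint of at least one edge of $E'$, so there are no isolated vertices to worry about as a degenerate subcase.

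In the remaining case, suppose $v$ is not incident to any edge of $M$. Pick any edge $f = (v,w) \in E'$ incident to $v$; such an $f$ exists by the previous remark. The key observation is that by maximality of $M$, the edge $f$ cannot be added to $M$ without destroying the matching property, so $f$ must share an endpoint with some edge $g \in M$. Since $v$ is unmatched by assumption, the shared endpoint between $f$ and $g$ has to be $w$, i.e., $g = (w,x)$ for some $x$. Then $f$ and $g$ are adjacent, so $v$ (being incident to $f$, which is adjacent to $g \in M$) is ev-dominated by $g$. This exhausts both cases and finishes the argument.

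I do not anticipate a real obstacle here, since the claim is essentially folklore: a maximal matching automatically ev-dominates every vertex via the standard "cannot be extended" argument. The only small care needed is to keep the definitions straight, specifically that ev-domination is about vertices being close to edges (incidence or incidence-to-adjacent-edge) rather than about edges dominating edges, and to observe that we are working inside the subgraph $G'$ induced by $N_e^r[e]$ so that adjacency witnessed in $G'$ really does correspond to adjacency in the ambient graph.
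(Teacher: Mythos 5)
Your proof is correct and takes essentially the same route as the paper: the paper argues by contradiction (an un-ev-dominated vertex $v$ gives an edge $e'$ incident to $v$ with $N_e[e']\cap M=\emptyset$, so $M\cup\{e'\}$ would be a larger matching, contradicting maximality), whereas you give the direct contrapositive of that same maximality argument, with the same observation that every vertex of the edge-induced subgraph is an endpoint of some edge. No gap.
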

\begin{proof}
For the contradiction, assume that $M$ is not an {\it EVDS} of the graph $G'=(V',E')$ induced by the edges in $N_e^r[e]$. It means that there exists a vertex $v\in V'$ which is incident to an edge $e'\in E'$ such that $N_e[e']\cap M=\emptyset$. It contradicts that $M$ is a maximal matching in $G'$ as the set $M\cup\{e'\}$ is a matching in $G'$. Thus, the lemma follows. 
\end{proof}

\begin{lemma}\label{lemp4}
If $G'=(V',E')$ is a UDG induced by the edges in $N_e^r[e]$ and $M$ is the maximal matching of $G'$ then $ \vert  EVD(N_{e}^{r}[e]) \vert  \leq O(r^2)$.
\end{lemma}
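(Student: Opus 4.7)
The plan is to combine a geometric localization of $V'$ inside a small planar disk with a standard UDG packing estimate. First I would argue that all vertices of $V'$ lie in a disk of radius $O(r)$ in the plane; then the bound $|M|=O(r^2)$ reduces to bounding the number of UDG vertices in that disk.

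For the localization, observe that every edge in $N_e^r[e]$ is reachable from $e$ by at most $r$ steps of edge-adjacency, so each endpoint of any such edge is within graph distance $r+1$ from an endpoint of $e$ in $G$. Because $G$ is a UDG each edge has Euclidean length at most $1$, so graph distance is an upper bound on Euclidean distance. Consequently $V' \subseteq B$, where $B$ is the closed disk of radius $R = r+2$ centered at an endpoint of $e$.

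For the packing step, I would invoke the standard UDG estimate that the number of vertices inside a disk of radius $R$ is $O(R^2)$. Because $M$ is a matching, the $2|M|$ endpoints of its edges are distinct vertices of $V'$, so $|M| \le |V'|/2 = O(R^2) = O(r^2)$. Combined with Lemma \ref{lemp3}, this gives $|EVD(N_e^r[e])| = |M| = O(r^2)$.

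The step I expect to be the main obstacle is the packing bound itself, since in principle a UDG permits many disk centers to cluster closely and a matching size can exceed the independence number (inside a clique $K_k$, the independence number is $1$ but the matching size is $\lfloor k/2\rfloor$). The clean version of the area argument bounds only the independent set in $B$ by $O(R^2)$, so the matching bound requires additional work: a tiling of $B$ by constant-diameter cells together with the UDG kissing-number bound on local independent sets, and a charging of matching edges through the maximality of $M$ and Lemma \ref{lemp3}, is the mechanism by which the crude independent-set bound is extended to the matching $M$, yielding the desired $O(r^2)$ count.
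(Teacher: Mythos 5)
Your localization step is fine and matches the paper: every vertex of the subgraph induced by $N_e^r[e]$ lies within Euclidean distance about $r+2$ of $e$, since consecutive edges share endpoints and each edge of a UDG has length at most $1$. The genuine gap is the packing step, and your own diagnosis of it is sharper than your proposed remedy. The claim that a disk of radius $R$ contains only $O(R^2)$ vertices of a UDG is false: UDG vertices have no minimum pairwise separation, so arbitrarily many of them can cluster in a ball of diameter less than $1$. More importantly, the repair you sketch --- tiling $B$ into constant-diameter cells, bounding local independent sets via a kissing-number argument, and charging matching edges to cells through maximality of $M$ --- cannot close the gap, because the statement it would establish is not true for an arbitrary maximal matching. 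If $t$ vertices lie in a common ball of diameter less than $1$ they form a clique, and already $N_e^2[e]$ (for an edge $e$ inside the cluster) contains all $\binom{t}{2}$ of its edges; every maximal matching of that clique has $\lfloor t/2 \rfloor$ edges, a quantity independent of $r$. Any charging scheme must assign each matching edge to a cell containing one of its endpoints, and a single cell can absorb $\Theta(t)$ matching edges, so the count is not $O(r^2)$. What area arguments do bound by $O(r^2)$ is the number of nonempty constant-diameter cells, hence the size of a minimum EVDS of $N_e^r[e]$ (one edge incident to a vertex of each nonempty cell suffices), or the size of any family of pairwise far-apart edges --- not the size of a maximal matching.

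For comparison, the paper's own proof is essentially the bald assertion you were uneasy about: it states that $|M|$ is at most the number of unit disks that can be packed in a disk of radius $r+2$ centered at the midpoint of $e$, hence $|M|\leq (r+2)^2$, with no further justification; but vertex-disjointness of matching edges does not make the corresponding disks disjoint, so the clustering objection applies there as well. So you have not taken a different route --- you have located exactly the delicate point of this lemma --- but neither your sketched mechanism nor the packing heuristic it mirrors yields a valid $O(r^2)$ bound for a maximal matching, as opposed to a bound for an optimal (or cell-greedy) edge-vertex dominating set.
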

\begin{proof}
First, we find a maximal matching $M$, before finding the {\it EVDS} in $G'=(V',E')$ which is induced by the edges of $N_e^r[e]$. The number of edges in $M$ of $G'$ is bounded by the number of unit disks that are packed in a disk of radius $r+2$ and centered at the middle of the edge $e$. Hence, $\vert M\vert \leq (r+2)^2$ and the cardinality of $EVD(N_e^r[e])$ is bounded by $\vert M\vert $ (see Lemma \ref{lemp3}). Therefore, we have 
\[ \vert EVD(N_e^r[e])\vert \leq \vert M\vert \leq (r+2)^2\leq O(r^2). \] 
\end{proof}

\begin{theorem}\label{thmp1}
There exists an $r_1$ which violates the following inequality. \[ \vert EVD(N_e^{{r_{1}}+4}[e])\vert > \rho\vert EVD(N_e^{r_{1}}[e])\vert  \] where $\rho=1+\epsilon$ and $r_1$ is bounded by $O(\frac{1}{\epsilon}\log {\frac{1}{\epsilon}})$.
\end{theorem}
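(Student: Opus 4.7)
The plan is to argue by contradiction: assume the geometric growth condition $|EVD(N_e^{r+4}[e])|>\rho|EVD(N_e^r[e])|$ persists for every $r=0,4,8,\dots,4k$, and then show that $k$ cannot be too large because $|EVD(N_e^r[e])|$ grows only polynomially in $r$ by Lemma \ref{lemp4}. Telescoping the assumed inequality along $r=0,4,\dots,4k$ yields
\begin{equation*}
|EVD(N_e^{4k}[e])| \;>\; \rho^{k}\,|EVD(N_e^{0}[e])| \;\geq\; (1+\epsilon)^k,
\end{equation*}
since $N_e^0[e]=\{e\}$ has a trivial ev-dominating set of size $1$.

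Next I would combine this exponential lower bound with the quadratic upper bound from Lemma \ref{lemp4}, namely $|EVD(N_e^{4k}[e])|\leq (4k+2)^2 = O(k^2)$. Putting the two together gives
\begin{equation*}
(1+\epsilon)^k \;<\; c\,k^2
\end{equation*}
for an absolute constant $c>0$. Taking logarithms, $k\ln(1+\epsilon) < \ln c + 2\ln k$, and using $\ln(1+\epsilon)\geq \epsilon/2$ for $\epsilon\in(0,1]$ we obtain $k\epsilon \leq O(\log k)$. Solving this implicit inequality yields $k=O\!\bigl(\tfrac{1}{\epsilon}\log\tfrac{1}{\epsilon}\bigr)$, and consequently the smallest index $r_1$ that violates the growth inequality satisfies $r_1 \leq 4k = O\!\bigl(\tfrac{1}{\epsilon}\log\tfrac{1}{\epsilon}\bigr)$.

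The main obstacle I expect is justifying the $O(\tfrac{1}{\epsilon}\log\tfrac{1}{\epsilon})$ bound cleanly from $k\epsilon \leq O(\log k)$; the cleanest route is to substitute $k=\tfrac{C}{\epsilon}\log\tfrac{1}{\epsilon}$ for a sufficiently large constant $C$ and verify that the inequality fails, which forces $k$ to be below this threshold. A minor secondary issue is handling the base case where $|EVD(N_e^0[e])|=1$ (needed to seed the telescoping product) and making sure the constants absorbed into the $O(\cdot)$ from Lemma \ref{lemp4} do not interact badly with $\epsilon$; since those constants are independent of $\epsilon$, the absorption is harmless.
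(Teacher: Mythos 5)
Your proof is correct and follows essentially the same route as the paper: assume the geometric growth condition persists, telescope it against the quadratic packing bound of Lemma \ref{lemp4}, and conclude that a polynomial in $r$ cannot stay above an exponential, which forces the smallest violating index to satisfy $r_1=O(\frac{1}{\epsilon}\log{\frac{1}{\epsilon}})$. The only differences are cosmetic: you telescope down to $N_e^0[e]$ (avoiding the paper's even/odd case split) and you derive the $O(\frac{1}{\epsilon}\log{\frac{1}{\epsilon}})$ bound explicitly, whereas the paper defers that final calculation to Nieberg and Hurink \cite{nieb2006}.
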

\begin{proof}
On contrary, without loss of generality for $r_1$, assume that there exists an edge $e\in E$ such that 
\begin{equation*}\label{eq2}
 \vert EVD(N_e^{r+4}[e])\vert > \rho\vert EVD(N_e^{r}[e])\vert 
\end{equation*}
 for all $r\geq r_1$. Then, from Lemma \ref{lemp4} , we have  \[ (r+6)^2 \geq \vert EVD(N_e^{r+4}[e])\vert . \]
Hence, when $r$ is even we have,

\begin{align}\label{eq3}
(r+6)^2\geq \vert EVD(N_e^{r+4}[e])\vert   > \rho \vert EVD(N_e^{r}[e])\vert >   \cdots  > \rho^{\frac{r}{2}}\vert EVD(N_e^{2}[e])\vert \geq \rho^{\frac{r}{2}} 
\end{align}
and when $r$ is odd, we have,
\begin{align}\label{eq4}
(r+6)^2\geq \vert EVD(N_e^{r+4}[e])\vert   > \rho \vert EVD(N_e^{r}[e])\vert > \cdots  > \rho^{\frac{r-1}{2}}\vert EVD(N_e^{1}[e])\vert \geq \rho^{\frac{r-1}{2}}
\end{align}
Now, we can observe that in both the inequalities (\ref{eq3}), (\ref{eq4}) on the left-hand side we have a polynomial in $r$ which is at least the right-hand side value which is exponential in $r$, it is a contradiction. Therefore, for all $r\geq r_1$ the inequality (\ref{eq3}) cannot hold, hence there exists such $r_1$. Ultimately, $r_1$ depends only on $\rho$, not on the size of the edge-induced subgraph by $N_e^{r+4}[e]$. As in \cite{nieb2006}, we can argue that $r_1$ is bounded by $O(\frac{1}{\epsilon}\log {\frac{1}{\epsilon}})$ where $\rho=1+\epsilon$. 
\end{proof}

\begin{lemma}\label{lemp5}
Given an $\epsilon > 0$, for an edge $e\in E$, $EVD_{opt}(Q_i)$ can be computed in polynomial time.
\end{lemma}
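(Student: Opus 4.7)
The plan is to observe that $|Q_i|$ is bounded by a constant depending only on $\epsilon$, and then obtain $EVD_{opt}(Q_i)$ by exhaustively enumerating candidate edge subsets of $Q_i$.

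First I would chain together the two bounds already established. By construction $Q_i = N_e^{r_i+4}[e]$, and Theorem~\ref{thmp1} gives $r_i = O\bigl(\tfrac{1}{\epsilon}\log\tfrac{1}{\epsilon}\bigr)$. Feeding this into the packing argument of Lemma~\ref{lemp4}, every vertex incident to an edge of $Q_i$ lies inside a disk of radius $r_i+6$ centered at the midpoint of $e$, so the number of vertices of the edge-induced subgraph on $Q_i$, and hence the number $|Q_i|$ of its edges, is at most $c(\epsilon) := O\bigl(\tfrac{1}{\epsilon^2}\log^2\tfrac{1}{\epsilon}\bigr)$, a constant once $\epsilon$ is fixed.

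Next I would compute $EVD_{opt}(Q_i)$ by brute force. Enumerate all subsets $T\subseteq Q_i$ in order of increasing cardinality, and for each candidate $T$ check in $O(|Q_i|^2)$ time whether every vertex incident to an edge of $Q_i$ is ev-dominated by some edge of $T$; return the first $T$ that passes. Since the total number of candidates is at most $2^{c(\epsilon)}$ and each check is of constant cost relative to the input size, the whole routine takes time independent of $|V|$ and $|E|$, which is trivially polynomial in the size of $G$.

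The only step that needs a bit of care, rather than being a real obstacle, is confirming that the packing bound of Lemma~\ref{lemp4} applies uniformly to $Q_i$: one has to note that every edge of $N_e^{r_i+4}[e]$ has both endpoints within graph-distance $r_i+5$ of $e$, which embeds in a Euclidean disk of radius at most $r_i+6$ by the unit-diameter property of UDGs. Once this geometric envelope is in place, the remainder of the argument is mechanical.
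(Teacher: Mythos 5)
There is a genuine gap in your argument: the claim that $|Q_i|$ is bounded by a constant $c(\epsilon)$ is false for unit disk graphs. The geometric containment only tells you that all relevant vertices lie in a Euclidean disk of radius $O(r_i)$; it does \emph{not} bound their number, because the centers of unit disks may be packed arbitrarily densely (e.g., $n$ vertices inside a single cell of diameter $1$ form a clique, all of which can lie in $N_e^{r_i+4}[e]$). The packing argument of Lemma~\ref{lemp4} bounds only the number of \emph{pairwise disjoint} unit disks in that region, i.e., the size of a (maximal) matching and hence of an {\it EVDS} of $Q_i$ — it says nothing about $|Q_i|$ itself, which can be $\Theta(n^2)$. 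Consequently your brute force over all $2^{|Q_i|}$ subsets is exponential in $n$ in the worst case, and the assertion that the routine runs in time independent of $|V|$ and $|E|$ cannot be right; indeed it would contradict the $n^{O(c^2)}$ running time stated in Theorem~\ref{thmp2}.

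The repair is the route the paper takes: what is constant (for fixed $\epsilon$) is the \emph{cardinality of the optimal solution}, not the ground set. By Lemma~\ref{lemp4} an {\it EVDS} of $Q_i\subseteq N_e^{r+4}[e]$ of size $O(r^2)$ exists, with $r=O(\frac{1}{\epsilon}\log\frac{1}{\epsilon})$ by Theorem~\ref{thmp1}, so $|EVD_{opt}(Q_i)|=O(r^2)$. One then enumerates all tuples of at most $O(r^2)$ edges drawn from $Q_i$ (a set of size polynomial in $n$), i.e., $O(n^{r^2})$ candidates, checking each for the ev-domination property in polynomial time. This gives polynomial time for fixed $\epsilon$, with the exponent depending on $\epsilon$, which is exactly what Lemma~\ref{lemp5} needs and what Theorem~\ref{thmp2} reports. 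Your enumeration ordered by increasing cardinality is fine as a search strategy, but it must be cut off at the $O(r^2)$ bound on the optimum and performed over the (large) set $Q_i$, not justified by a spurious constant bound on $|Q_i|$.
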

\begin{proof}
From the way of construction of $Q_i$, we can see that $Q_i\subseteq  N_e^{r+4}[e]$. The cardinality of {\it EVDS} of $N_e^{r+4}[e]$ is bounded by $O(r^2)$ (see Lemma \ref{lemp4}), where $r$ is bounded by $O(\frac{1}{\epsilon}\log {\frac{1}{\epsilon}})$ (see Theorem \ref{thmp1}). Hence, we need at most $O(n^{r^2})$ possible combinations of $O(r^2)$-tuples of vertex points to check whether the selected tuple is an {\it EVDS} of $Q_i$.  
\end{proof}

\begin{lemma}\label{lemp6}
$\bigcup_{i=1}^{k} EVD(Q_i)$ is an edge-vertex dominating set in $G=(V,E)$.
\end{lemma}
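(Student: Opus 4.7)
My plan is to fix an arbitrary vertex $v \in V$ and exhibit an edge $g \in \bigcup_{i=1}^{k} EVD(Q_i)$ that ev-dominates $v$ in $G$. Since an isolated vertex cannot be ev-dominated by any edge whatsoever, I may assume without loss of generality that $v$ has at least one incident edge in $G$; call it $f \in E$.

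The next step is a coverage observation about the subset-construction procedure of the previous subsection. The procedure halts only once $E_{k+1}=\emptyset$, and at iteration $i$ it strips from $E_i$ a batch of edges that is contained in the just-formed $Q_i$. Consequently every edge of $E$ must lie in at least one of the subsets $Q_1,Q_2,\dots,Q_k$. In particular there exists an index $i_0$ with $f \in Q_{i_0}$, so $v$ is an endpoint of an edge in $Q_{i_0}$ and hence a (non-isolated) vertex of the edge-induced subgraph $G_{i_0}$ on $Q_{i_0}$.

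I would then invoke the fact that $EVD(Q_{i_0})$ is, by definition, an edge-vertex dominating set of $G_{i_0}$ (a concrete instance being a maximal matching, via Lemma \ref{lemp3}). Therefore some edge $g \in EVD(Q_{i_0})$ ev-dominates $v$ inside $G_{i_0}$. Because edge-adjacency in the subgraph $G_{i_0}$ is inherited by the supergraph $G$, the same edge $g$ ev-dominates $v$ in $G$ as well, and $g \in \bigcup_{i=1}^{k} EVD(Q_i)$. Since $v$ was arbitrary, the union is an edge-vertex dominating set of $G$.

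I do not foresee a serious obstacle: the argument reduces to the coverage claim that every edge of $E$ eventually lands in some $Q_i$, combined with Lemma \ref{lemp3}. The only point that deserves care is to verify, directly from the termination condition of the construction, that no edge can persist in the $E_i$ sequence indefinitely and thereby evade every $Q_i$; once this is pinned down, the rest is immediate.
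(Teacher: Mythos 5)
Your argument is correct and follows essentially the same route as the paper's proof: every edge incident to a vertex $v$ ends up in some $Q_i$ by the construction (the removed batches are contained in the $Q_i$'s and the process ends with $E_{k+1}=\emptyset$), and since $EVD(Q_i)$ ev-dominates all vertices of the subgraph induced by $Q_i$, with edge-adjacency inherited by $G$, the vertex $v$ is ev-dominated by the union. Your write-up merely makes explicit the coverage and inheritance steps that the paper states more tersely.
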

\begin{proof}
It follows from the construction of the collection of subsets of edges $\{ Q_1, Q_2,$ $\dots, Q_k \}$ that each edge that is incident to a vertex $v\in V$ belongs to a specific subset $Q_i$ and $EVD(Q_i)$ is an {\it EVDS} of the graph induced by the edges of $Q_i$. Therefore, every vertex $v\in V$ is incident to at least one edge $e$ such that at least one edge of $N_e[e]$ is in $\bigcup_{i=1}^{k} EVD(Q_i)$.  
\end{proof}

\begin{corollary}\label{cor}
$\bigcup_{i=1}^{k} EVD_{opt}(Q_i)$ is an edge-vertex dominating set in $G=(V,E)$, for the collection of subsets of edges $Q=\{ Q_1,Q_2,\dots,Q_k\}$.

\end{corollary}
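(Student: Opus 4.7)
The plan is to observe that Corollary \ref{cor} follows from essentially the same argument as Lemma \ref{lemp6}, with $EVD(Q_i)$ replaced by $EVD_{opt}(Q_i)$. The proof of Lemma \ref{lemp6} uses only one property of $EVD(Q_i)$, namely that it is an ev-dominating set of the subgraph induced by $Q_i$. Since $EVD_{opt}(Q_i)$ satisfies the same property by definition (it is a minimum-cardinality ev-dominating set of that subgraph, and hence in particular an ev-dominating set), the argument transfers verbatim.

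More concretely, I would proceed in two steps. First, I would recall from the subset-construction procedure that $\{Q_1, Q_2, \dots, Q_k\}$ collectively cover every edge of $E$: at each iteration we set $E_{i+1} = E_i \setminus N_e^{r_i+4}(e)$ and $Q_i = N_e^{r_i+4}[e]$, and we terminate only when $E_{i+1} = \emptyset$. Hence every vertex $v \in V$ incident to at least one edge of $G$ is incident to some edge belonging to some $Q_i$. Second, by definition $EVD_{opt}(Q_i)$ is an ev-dominating set of the edge-induced subgraph on $Q_i$, so every vertex incident to an edge of $Q_i$ has an incident edge $e'$ such that $N_e[e'] \cap EVD_{opt}(Q_i) \neq \emptyset$. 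Since edge-adjacency in the subgraph induced by $Q_i$ is preserved in $G$, this ev-domination carries over to $G$.

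Combining these two steps, every $v \in V$ with a neighbor in $G$ is ev-dominated in $G$ by at least one edge of $\bigcup_{i=1}^{k} EVD_{opt}(Q_i)$, so this union is an ev-dominating set of $G$. The argument is entirely routine; the only point worth checking is that domination within the induced subgraph is equivalent to domination by the same edges in the ambient graph $G$, which is immediate because neither adjacency of edges nor incidence of edges and vertices changes when we restrict or reinsert into $G$. For this reason the statement is labelled a corollary rather than a lemma, and there is no genuine obstacle in the proof.
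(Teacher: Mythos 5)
Your proposal is correct and matches the paper's intent: the paper states the corollary without a separate proof, treating it as immediate from Lemma \ref{lemp6} because $EVD_{opt}(Q_i)$ is in particular an ev-dominating set of the subgraph induced by $Q_i$, which is exactly the observation you make. Your added checks (that the $Q_i$ cover all edges by the construction's termination condition, and that ev-domination in an induced subgraph carries over to $G$) are the same ingredients used in the paper's proof of Lemma \ref{lemp6}.
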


\begin{theorem}\label{thmp2}
For a given unit disk graph and an $\epsilon>0$, there exists a PTAS (an $(1+\epsilon)$-approximation) algorithm for the {\it EVDS} problem with running time $n^{O(c^2)}$, where $c=(\frac{1}{\epsilon}\log{\frac{1}{\epsilon}})$.  
\end{theorem}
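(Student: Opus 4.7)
The plan is to assemble the PTAS by combining the subset construction from Section~4.1 with Lemmas~\ref{lemp1},~\ref{lemp2},~\ref{lemp5},~\ref{lemp6}, Corollary~\ref{cor}, and Theorem~\ref{thmp1}, and then to bound the running time in terms of $c=\frac{1}{\epsilon}\log\frac{1}{\epsilon}$.

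First I would fix $\rho=1+\epsilon$ and run the iterative procedure: starting from a seed edge $e$ in the current edge set $E_i$, grow $N_e^r[e]$ for $r=0,1,2,\ldots$ and stop at the smallest index $r_i$ for which $|EVD_{opt}(N_e^{r_i+4}[e])|\leq \rho\,|EVD_{opt}(N_e^{r_i}[e])|$. Theorem~\ref{thmp1} guarantees the existence of such an $r_i$ and the crucial bound $r_i = O(c)$. The procedure then records $S_i=N_e^{r_i}[e]$ and $Q_i=N_e^{r_i+4}[e]$, deletes $N_e^{r_i+4}(e)$ from the current edge set, and repeats. Because each iteration removes at least the seed edge, the loop terminates after at most $|E|=O(n^2)$ rounds, producing collections $S=\{S_1,\ldots,S_k\}$ and $Q=\{Q_1,\ldots,Q_k\}$ with $S_i\subseteq Q_i$.

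Next I would verify the two hypotheses of Lemma~\ref{lemp2}. The collection $S$ is $4$-separated: any edge entering a later $S_j$ must lie outside $N_e^{r_i+4}(e)$ for every earlier $i$, hence at edge-distance strictly greater than $4$ from $S_i$. The per-block inequality $|EVD_{opt}(Q_i)| \leq (1+\epsilon)\,|EVD_{opt}(S_i)|$ follows directly from the stopping rule. Corollary~\ref{cor} ensures that $\bigcup_{i} EVD_{opt}(Q_i)$ is a global ev-dominating set of $G$. Applying Lemma~\ref{lemp2} then yields a total cardinality of at most $(1+\epsilon)\,|EVD_{opt}(E)|$, the required $(1+\epsilon)$-approximation.

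For the running time, each of the $O(n^2)$ iterations performs $r_i=O(c)$ neighborhood expansions and one call to compute $EVD_{opt}(Q_i)$, which by Lemma~\ref{lemp5} costs $n^{O(r_i^{\,2})}=n^{O(c^2)}$; the total is therefore $n^{O(c^2)}$, as claimed. The main subtlety I expect to have to argue carefully is not the approximation ratio itself but that the subset construction simultaneously preserves the $4$-separation of $S$ across iterations and the covering property asserted in Lemma~\ref{lemp6}, namely that every vertex of $G$ is incident to some edge that belongs to some $Q_i$ so that it is ev-dominated by the union $\bigcup_i EVD_{opt}(Q_i)$. Once these invariants are checked, the theorem follows by a direct invocation of the lemmas developed above.
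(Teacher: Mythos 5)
Your proposal is correct and takes essentially the same route as the paper: the paper's own proof of Theorem~\ref{thmp2} is just a one-line citation of Corollary~\ref{cor} and Lemma~\ref{lemp5}, and you have simply spelled out the same Nieberg--Hurink-style pipeline (subset construction with stopping rule, $4$-separation, Lemmas~\ref{lemp1} and~\ref{lemp2}, the $r_i=O(\frac{1}{\epsilon}\log\frac{1}{\epsilon})$ bound of Theorem~\ref{thmp1}, and the local computation of Lemma~\ref{lemp5}) that the paper develops in Section~4. The only deviation is that you state the stopping rule with $EVD_{opt}$ where the paper's inequality (\ref{eq1}) uses the maximal-matching sets $EVD$; since Lemma~\ref{lemp5} shows $EVD_{opt}$ of the bounded neighborhoods is computable in polynomial time, your variant is legitimate and in fact makes the hypothesis $|EVD_{opt}(Q_i)|\leq\rho\,|EVD_{opt}(S_i)|$ of Lemma~\ref{lemp2} immediate.
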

\begin{proof}
Follows from Corollary \ref{cor} and Lemma \ref{lemp5}. 
\end{proof}

\section{5-Factor Approximation Algorithm}

 In this section, we present a $5$-factor approximation algorithm for the {\it EVDS} problem on UDG. Let ${\cal S}$ be a set of $n$ points given in the Euclidean plane. We join two of these points with an edge if the distance between those two points is less than or equal to 1 unit. Let $E$ be the set of such edges with cardinality $m$ and $V$ be the set of vertices corresponding to points in ${\cal S}$. The graph induced by $V$ and $E$ will form a UDG since the distance between any two end-points of $e\in E$ is at most 1. Assume that such an UDG has no isolated vertex, otherwise {\it EVDS} does not exist. To present an approximation algorithm, we consider an axis-parallel rectangular region ${\cal R}$ that contains UDG. We then partition the region ${\cal R}$ into grid cells by a hexagonal tessellation, where each hexagonal cell is of side length $\frac{1}{2}$. Hence the maximum distance between any two points inside a cell is at most 1. Assume that no point in $V$ lies on the boundary of any hexagon in the partition. 

\begin{lemma}\label{lema1}
Any edge $e$ with its two endpoints lying in adjacent hexagons can ev-dominate every point in those two hexagons.
\end{lemma}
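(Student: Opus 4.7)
The plan is to show directly that every vertex in the two adjacent hexagons is either an endpoint of $e$ or is incident to some edge of $G$ that shares an endpoint with $e$. I would start by fixing notation: let $e=(u,v)$ be the given edge with $u\in H_1$ and $v\in H_2$, where $H_1$ and $H_2$ are two adjacent hexagons in the partition. Let $w$ be an arbitrary vertex of $V$ lying in $H_1\cup H_2$; I need to verify that $w$ is ev-dominated by $e$.

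The key geometric fact to extract first is the diameter of a hexagonal cell. For a regular hexagon of side length $\tfrac{1}{2}$, the maximum distance between any two points (attained at two antipodal vertices) equals $2\cdot\tfrac{1}{2}=1$. This is the single quantitative ingredient of the proof and is already implicit in the statement that hexagons of side $\tfrac{1}{2}$ have internal point-pair distance at most $1$.

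Given that fact, I would split into cases on which hexagon contains $w$. If $w\in H_1$, then $w$ and $u$ both lie in $H_1$, so $\lVert w-u\rVert\le 1$, and by the definition of a UDG with unit distance threshold the pair $(w,u)$ is either equal (in which case $w$ is an endpoint of $e$) or forms an edge of $E$. In the latter case $(w,u)$ is adjacent to $e$ at the common vertex $u$, so $w$ is incident to an adjacent edge of $e$ and therefore ev-dominated. The case $w\in H_2$ is symmetric using $v$ in place of $u$. Thus every vertex of $V$ in $H_1\cup H_2$ is ev-dominated by $e$, which is exactly the claim.

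I do not anticipate a real obstacle here; the only subtle point to state cleanly is the assumption (already made by the authors) that no point of $V$ lies on a hexagon boundary, so the containment $w\in H_1\cup H_2$ is unambiguous, and the diameter-one property is strict enough that the edge $(w,u)$ (or $(w,v)$) genuinely belongs to $E$.
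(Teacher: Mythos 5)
Your proof is correct and follows essentially the same route as the paper's: both rest on the fact that a hexagon of side $\tfrac{1}{2}$ has diameter $1$, so any vertex $w$ in the hexagon containing an endpoint of $e$ is joined to that endpoint by an edge of $E$ adjacent to $e$, hence ev-dominated. Your version merely spells out the case split ($w\in H_1$ versus $w\in H_2$, and $w$ possibly equal to an endpoint) that the paper leaves implicit.
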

\begin{proof}
It follows from the fact that there will be an edge between any two points that lie within the same hexagon since the distance between them is at most 1. Therefore, an edge $e\in EVDS$ whose endpoint lies in that hexagon will ev-dominate every other point in that hexagon. 
\end{proof}

  The outline of the algorithm is as follows. Initialize the set $S^{ev}$ (which will hold the edges of {\it EVDS}) initially as empty. Now, arbitrarily pick an edge $e\in E$ whose endpoints lie in different cells. Add this edge to $S^{ev}$ and set $E=E\backslash \{e\}$. Mark all points that are ev-dominated by $e$. If there are any unmarked vertices, now choose an edge $e\in E$ that is incident to any of the unmarked vertices, with its other endpoint lying in a different cell. Add $e$ to $S^{ev}$ and mark all the unmarked points that are ev-dominated by $e$. Repeat this process until every point in $V$ is marked (see Algorithm \ref{alg1}).

%

\begin{algorithm}\label{alg1}
\caption{Edge-vertex domination }
\textbf{Input:} An UDG $G=(V, E)$ placed over an hexagonal grid. \\
\textbf{Output:} An {\it EVDS} of $G$.
\begin{algorithmic}[1]
\State Initialize $S^{ev}=\emptyset$, $E'=E$, and let all vertices in $V$ be unmarked initially.
\While {there is an edge $e\in E'$ with its both end-points unmarked}
\State Pick an edge $e\in E'$ such that $e=(u,v)$, the unmarked vertices $u\in A$ and $v\in B$, where $A$ and $B$ are adjacent hexagonal cells
\State Set $S^{ev}=S^{ev}\cup \{e\}$ and $E'=E'\backslash \{e\}$
\State Mark all vertices which are incident to $N_e[e]$
\EndWhile
\While {there is a hexagon containing unmarked vertices}
	\State Pick any arbitrary edge $e\in E$ with at least one endpoint in that hexagon
	\State $S^{ev}=S^{ev}\cup \{e \}$
\EndWhile

\Return $S^{ev}$
%

\end{algorithmic}

\end{algorithm}
\begin{theorem}
Algorithm 1 gives a factor 5-approximation for {\it EVDS} problem on a UDG in $O(m+n)$ time. 
\end{theorem}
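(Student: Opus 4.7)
The two tasks are to bound $|S^{ev}|\le 5\,|OPT|$ and to show an $O(m+n)$ running time. Correctness of Algorithm~1 is almost immediate: each first-loop insertion of $e'=(u',v')$ marks every vertex of $N[u']\cup N[v']$, the loop terminates once no edge has both endpoints unmarked, and the second loop then ev-dominates every residual unmarked vertex by Lemma~\ref{lema1} (any edge with an endpoint in a hexagon ev-dominates all vertices of that hexagon).

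For the approximation ratio I plan a charging/packing argument. Assign to each $e'\in S^{ev}$ a representative vertex $r(e')$ that was unmarked at the moment $e'$ was inserted: either endpoint qualifies for a first-loop edge, and for a second-loop edge I take any unmarked vertex in the triggering hexagon. The representatives are pairwise at Euclidean distance $>1$, since if $e''$ is chosen after $e'=(u',v')$ then $r(e'')\notin N[u']\cup N[v']$, which in particular forces $|r(e'')-r(e')|>1$. As $OPT$ is an EVDS, every $r(e')$ is ev-dominated by some $e^*=(u^*,v^*)\in OPT$; charge $e'$ to such an $e^*$. The representatives charged to a fixed $e^*$ then lie in $N[u^*]\cup N[v^*]$, itself contained in the union of two closed unit disks centred at $u^*$ and $v^*$ with $|u^*-v^*|\le 1$. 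A packing bound (the standard angular-sector pigeonhole shows that at most $5$ points with pairwise distance $>1$ fit inside a closed unit disk, and the heavy overlap of the two disks lets the two counts be merged into one cap of $5$) yields at most $5$ charges per $e^*$, hence $|S^{ev}|\le 5\,|OPT|$.

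For the running time I maintain adjacency lists, a marked flag per vertex, and a hexagon-to-vertex bucket built in $O(n)$ by reading each vertex's coordinates once. Each iteration of either loop picks an edge $(u,v)$ and (for the first loop) scans the neighbour lists of $u$ and $v$ to mark $N[u]\cup N[v]$, at cost $O(\deg(u)+\deg(v))$; amortised over the whole execution each edge is touched only $O(1)$ times, and the second loop visits each hexagon at most once. The total is $O(m+n)$.

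The main obstacle is pinning the packing constant to $5$ rather than the naive $10$ that one gets by bounding the two unit disks separately; the argument must exploit $|u^*-v^*|\le 1$ (which forces substantial overlap of the two disks) to handle both disks jointly instead of disk-by-disk.
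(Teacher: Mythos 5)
Your representative/charging scheme is a genuinely different route from the paper (the paper never charges to optimal edges; it groups the hexagons into mega-cells of ten cells and argues at most five algorithm edges versus at least one optimal edge per mega-cell), and the parts of your argument that set it up are fine: the representatives you define are indeed pairwise at distance greater than $1$, each representative lies in $N[u^*]\cup N[v^*]$ for the optimal edge it is charged to, and the $O(m+n)$ time analysis is essentially correct. However, the step you yourself flagged as the main obstacle is where the proof fails: it is \emph{not} true that at most $5$ points with pairwise distances greater than $1$ fit in the union of two closed unit-radius disks whose centres are at distance at most $1$. Take centres $u^*=(0,0)$, $v^*=(1,0)$ and the seven points $(-1,0)$, $(2,0)$, $(\cos 115^\circ,\pm\sin 115^\circ)\approx(-0.423,\pm 0.906)$, $(1+\cos 65^\circ,\pm\sin 65^\circ)\approx(1.423,\pm 0.906)$, and $(0.5,0)$. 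All seven lie in the union (each is within distance $1$ of $u^*$ or $v^*$), and every pairwise distance is strictly greater than $1$ (the smallest ones are $\approx 1.075$ and $\approx 1.293$). So the "merge the two disks into one cap of $5$" claim is false, and your charging argument as stated only yields a constant of at least $7$, not $5$; the naive disk-by-disk split gives $10$.

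Concretely, the gap is that nothing in your argument uses the hexagonal grid when bounding how many representatives an optimal edge can absorb, whereas the grid is exactly what the paper leans on to reach the constant $5$: an algorithm edge spanning two adjacent hexagons ev-dominates both entire cells (Lemma~\ref{lema1}), and the cells are grouped into mega-cells of ten hexagons so that the algorithm spends at most five edges where the optimum must spend at least one. To salvage your approach at ratio $5$ you would need an additional structural restriction on where the representatives charged to a single optimal edge can sit (for instance, exploiting that each first-loop algorithm edge crosses between two hexagonal cells, or charging pairs of endpoints rather than single representatives); the pure "pairwise distance $>1$ inside two overlapping unit disks" packing bound cannot give $5$.
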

\begin{proof}
 Algorithm 1 picks an edge $e$ arbitrarily whose endpoints lie in different hexagons and then repeatedly selects an edge between an unmarked vertex and another vertex in the different hexagon until there are no unmarked vertices (see Fig. \ref{figa2}). In Figure \ref{figa2}, we can observe that Algorithm 1 selected {\it EVDS} as $\{ e_1,e_2,e_3,e_4, e_5 \}$ whose cardinality is five whereas the optimal solution may have a single edge that will ev-dominate every given point (see the edge $e$ in Fig. \ref{figa2}). Next, one can see that the algorithm may select at most five times the optimal value, since an edge between points in two adjacent hexagons may ev-dominate the points in all of its adjacent eight hexagons. As we look at every edge between points to know whether its endpoints are the marked vertices and select an edge at lines 3 and 8 of Algorithm 1, the running time is polynomial in $m$ and $n$.

The approximation factor five of Algorithm 1 follows due to the following two facts:
\begin{enumerate}
\item If both the endpoints of an edge $e$ selected by Algorithm 1 lie within the same hexagon, then none of the vertices corresponding to these points are adjacent to a vertex of its adjacent hexagons.
\item Otherwise an edge $e$ selected by Algorithm 1 ev-dominates all the points in both the adjacent hexagons (Lemma \ref{lema1}).
\end{enumerate}
All the cells (hexagons) in ${\cal R}$ can be grouped as a collection of mega-cells (as in Fig. \ref{figa3}), where each mega-cell consists of ten adjacent hexagonal cells (cells colored with the same color in Fig. \ref{figa3}). Algorithm 1 picks at most five edges to ev-dominate all the points in each mega-cell, whereas in optimal solution at least one edge is required.
\end{proof}

\begin{figure}
\centering
\includegraphics[scale=1]{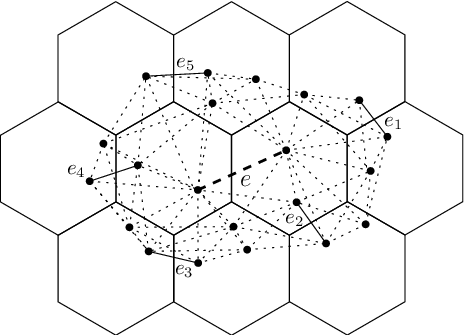}
\caption{ $EVDS =\{e_1,e_2,e_3,e_4,e_5\}$, and minimum $EVDS=\{e\}$}
 \label{figa2}
\centering
\end{figure}

\begin{figure}[htb]
\centering
\includegraphics[scale=0.42]{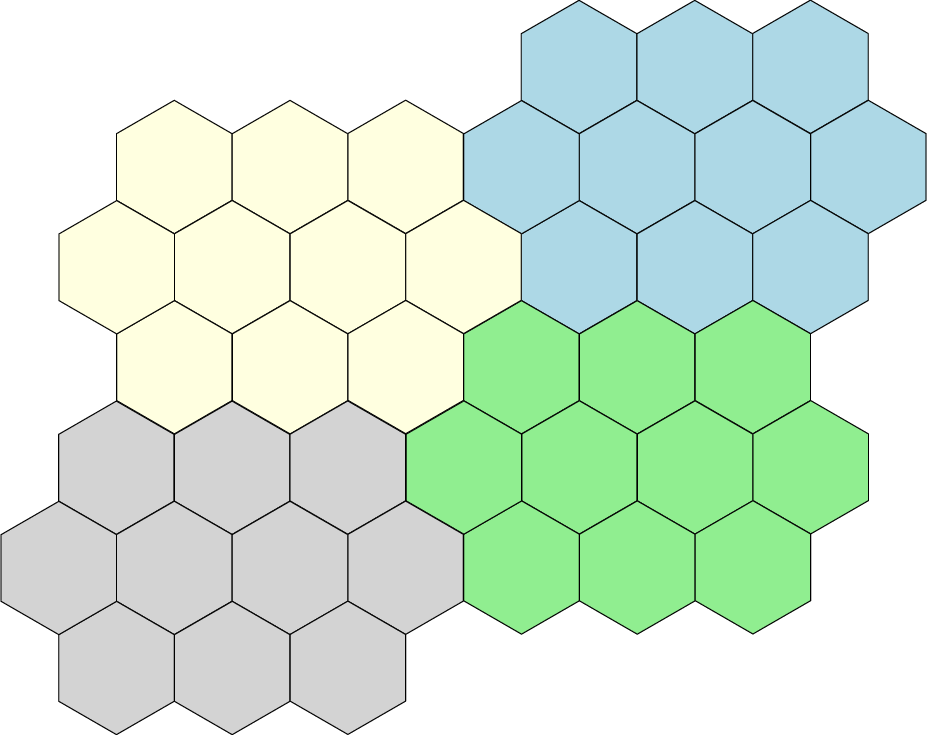}
\caption{Four adjacent Mega-cells}
 \label{figa3}
\centering
\end{figure}



\section{Conclusion}

 In this paper, we have studied the complexity and approximability of the {\it edge-vertex dominating set} problem on unit disk graphs ({\it EVDS-UDG}). We first proved that the decision version of the {\it EVDS-UDG} is {\tt NP-complete}. We then showed that the {\it EVDS-UDG} admits a PTAS. We also gave a simple 5-factor approximation algorithm in linear time. Although this 5-factor approximation algorithm is significantly faster when compared to PTAS, it requires a geometric representation of the input graph, whereas the proposed PTAS does not, hence is robust.

\bibliography{cas-refs}


\end{document}